\documentclass[11pt]{article}

\usepackage[margin=1in]{geometry}
\usepackage[round]{natbib}
\usepackage{amsmath,amssymb,amsfonts,amsthm,bm}
\usepackage{graphicx}
\usepackage{booktabs,longtable}
\usepackage{multirow}
\usepackage{array}
\usepackage{subcaption}
\usepackage{xcolor}
\usepackage{bbm}
\usepackage{algorithm}
\usepackage{algorithmic}
\usepackage[hidelinks]{hyperref}

\newcommand{\argmin}{\operatorname*{arg\,min}}

\newtheorem{assumption}{Assumption}
\newtheorem{theorem}{Theorem}

\newtheorem{lemma}{Lemma}

\title{Calibration-Free Robust Bayesian Inference for Discrete Unnormalized Models}
\author{
Jiongran Wang\\Texas A\&M University
\and
Debdeep Pati\\University of Wisconsin--Madison
\and
Anirban Bhattacharya\\Texas A\&M University
}
\date{}

\begin{document}
\maketitle

\begin{abstract}
Bayesian inference for discrete models with intractable normalizing constants is computationally challenging, particularly when the sample space is large or countably infinite. Although some likelihood-based Bayesian methods can circumvent the intractable normalizing constant, their uncertainty quantification may be poorly calibrated under model misspecification. Generalized Bayesian approaches can address this calibration issue, but typically require careful tuning of a learning rate. In this paper, we propose DFD-BETEL, a semiparametric Bayesian framework that uses the first-order optimality conditions of discrete Fisher divergence as moment conditions in Bayesian exponentially tilted empirical likelihood. The resulting posterior avoids evaluation of the normalizing constant and requires no learning-rate calibration. We establish consistency and asymptotic normality of the discrete Fisher divergence estimator and prove a Bernstein–von Mises theorem for the DFD-BETEL posterior, showing that its limiting covariance matches the sampling covariance of the estimator and yields asymptotically calibrated uncertainty quantification. Numerical studies show that DFD-BETEL remains competitive with likelihood-based Bayesian inference under correct specification and provides more reliable uncertainty quantification under model misspecification. An application to automobile-insurance claim data further illustrates the practical utility of DFD-BETEL.
\end{abstract}

\section{Introduction}

Discrete probabilistic models often assign an unnormalized mass $q(\mathbf{x};\boldsymbol{\theta})$ to each configuration $\mathbf{x}\in\mathcal{X}$, with the corresponding probability mass function given by $p(\mathbf{x};\boldsymbol{\theta}) = q(\mathbf{x};\boldsymbol{\theta})/Z(\boldsymbol{\theta})$. When the discrete sample space is large or countably infinite, evaluating the parameter-dependent normalizing constant
$Z(\boldsymbol{\theta}) =\sum_{\mathbf{x}\in\mathcal{X}}q(\mathbf{x};\boldsymbol{\theta})$ can be computationally prohibitive. This difficulty arises in a variety of discrete models, including lattice and network models \citep{gutmann2013estimation,park2018bayesian}. In the Bayesian setting, such models are commonly referred to as \textit{doubly-intractable}, since both the likelihood and the resulting posterior involve intractable normalizing constants. This poses a challenge for standard posterior computation. For example, a Metropolis--Hastings (MH) update requires ratios $Z(\boldsymbol{\theta}_1)/Z(\boldsymbol{\theta}_2)$, which are
unavailable when the model normalizing constant cannot be evaluated. 

\paragraph{Related literature.} A substantial literature has developed computational strategies for doubly-intractable models. Auxiliary-variable MCMC methods introduce additional variables to eliminate ratios of normalizing constants \citep{moller2006efficient,murray2006mcmc,liang2010double, rao2016data,lee2026delayed}, while other approaches construct estimators or approximations of otherwise intractable likelihood quantities \citep{andrieu2009pseudo,lyne2015russian,yang2025correlated}. Simulation-based approaches, including approximate Bayesian computation and Bayesian synthetic likelihood, provide another route when synthetic data can be generated from the model \citep{marin2012approximate,price2018bayesian}. However, exact simulation can itself be computationally challenging for many discrete models. More recently, contrastive Bayesian inference has been developed using noise-contrastive estimation \citep{sonobe2026contrastive}; see also \citet{park2018bayesian} for a review of Bayesian computation for doubly-intractable models. While these approaches provide different ways to circumvent intractable normalizing constants, likelihood-based Bayesian uncertainty quantification can still be poorly calibrated when the working model is misspecified \citep{kleijn2012bernstein}. This raises a complementary question: can Bayesian inference for discrete unnormalized models avoid the normalizing constant while retaining reliable uncertainty quantification under model misspecification?

\citet{hyvarinen2005estimation} proposed score matching as an alternative route to inference in unnormalized models without evaluating the normalizing constant. Score matching has since become a key tool in score-based generative modeling \citep{song2019generative,lai2025principles,ranganath2026unified} and has also been connected to Bayesian inference through model selection and robust Bayesian approaches \citep{shao2019bayesian,jewson2022general,
altamirano2023robust,matsubara2024generalized,
bharti2026amortised,wang2026robust}. For discrete data, the score-matching principle was first extended to binary variables through ratio matching \citep{hyvarinen2007some}, with subsequent developments extending to more general discrete sample spaces \citep{lyu2012interpretation,xu2022generalized,
matsubara2024generalized,laplante2025conjugate}. In particular, \citet{matsubara2024generalized} developed \textit{discrete Fisher divergence} (DFD) for discrete models whose sample spaces may be multidimensional and countably infinite, and incorporated DFD as a loss in generalized Bayesian inference. While this construction avoids evaluation of the normalizing constant, the resulting posterior requires calibration of a learning rate that directly affects posterior uncertainty.

A calibration-free route to robust inference is provided by exponentially tilted empirical likelihood (ETEL; \citealp{schennach2005bayesian}), a variant of empirical likelihood (EL; \citealp{owen1988empirical,qin1994empirical,owen2001empirical}) designed for inference from moment conditions without requiring full model specification. Both EL and ETEL admit Bayesian formulations \citep{lazar2003bayesian,schennach2005bayesian}. \citet{chib2018bayesian} established the frequentist properties of Bayesian ETEL, showing that it can provide calibrated posterior uncertainty under model misspecification. Related extensions consider semiparametric inference under partial model specification \citep{tang2022bayesian}.

\paragraph{Contributions.} We develop \emph{DFD-BETEL}, a semiparametric Bayesian framework that incorporates the first-order optimality conditions of DFD as moment conditions in Bayesian ETEL. The resulting posterior preserves the computational advantage of DFD by relying only on ratios of neighboring unnormalized masses and requires no learning-rate calibration. We establish consistency and asymptotic normality of the sample DFD estimator and prove a Bernstein--von Mises (BvM) theorem for the DFD-BETEL posterior. Its limiting covariance matches the sampling covariance of the DFD estimator, yielding asymptotically calibrated credible sets under model misspecification. We further evaluate DFD-BETEL on a range of discrete intractable models and an automobile-insurance application. The numerical studies demonstrate
competitive estimation under correct specification and more reliable uncertainty quantification under misspecification. A sensitivity analysis further illustrates the dependence of generalized Bayesian uncertainty quantification on the choice of learning rate, which DFD-BETEL avoids.

% The remainder of the paper is organized as follows. Section~\ref{sec:preliminary} introduces DFD and develops the
% DFD-BETEL framework, while Section~\ref{sec:theory} establishes its large-sample properties. Section~\ref{sec:experiments} presents numerical studies under correct specification and misspecification, a learning-rate sensitivity analysis, and an application to automobile-insurance claim data. The appendices contain proofs of the theoretical results, posterior computation details, additional simulation studies under correct specification and misspecification, further learning-rate sensitivity results, and additional details for the real-data analysis.

\section{Methodology}
\label{sec:preliminary}

\paragraph{Problem Formulation.}
Let $\mathbf{x}_1,\ldots,\mathbf{x}_n$ be independent observations from an unknown distribution with probability mass function $p_{\mathbf{x}}(\cdot)$ on a discrete product space $\mathcal{X}=\mathcal{X}_1\times\cdots\times\mathcal{X}_m$, where each $\mathcal{X}_j$ is order-isomorphic to a subset of $\mathbb{Z}$. We consider a parametric working model
\begin{equation*}
p(\mathbf{x};\boldsymbol{\theta}) = \frac{q(\mathbf{x};\boldsymbol{\theta})}{Z(\boldsymbol{\theta})},
\qquad
\boldsymbol{\theta}\in\Theta\subseteq\mathbb{R}^p,
\end{equation*}
where $q(\mathbf{x};\boldsymbol{\theta})$ is analytically tractable, but the normalizing constant $Z(\boldsymbol{\theta}) :=\sum_{\mathbf{x}\in\mathcal{X}} q(\mathbf{x};\boldsymbol{\theta})$ is computationally intractable. The intractability of $Z(\boldsymbol{\theta})$ poses a major challenge for standard parameter estimation procedures, such as maximum likelihood estimation \citep{moller2006efficient}. In this work, we say that the model is \textit{correctly specified} if there exists $\boldsymbol{\theta}_0\in\Theta$ such that $p_{\mathbf{x}}(\cdot)=p(\cdot;\boldsymbol{\theta}_0)$; otherwise, the model is \textit{misspecified}.

\paragraph{Discrete Fisher Divergence.}
Discrete extensions of score matching \citep{hyvarinen2005estimation} provide an alternative route to inference without evaluating $Z(\boldsymbol{\theta})$. \citet{hyvarinen2007some} introduced ratio matching for binary variables, with subsequent developments extending this principle to more general discrete sample spaces \citep{lyu2012interpretation,xu2022generalized, matsubara2024generalized,laplante2025conjugate}. In this work, we focus on DFD \citep{matsubara2024generalized}, which accommodates multidimensional and potentially countably infinite discrete sample spaces. Under DFD, the target parameter $\boldsymbol{\theta}^*$ is defined as the minimizer of the loss $J(\cdot)$:
\begin{equation}
\label{eq:target_parameter}
\begin{aligned}
    \boldsymbol{\theta}^*
    &= \argmin_{\boldsymbol{\theta} \in \Theta} J(\boldsymbol{\theta}), \\
    J(\boldsymbol{\theta})
    &= \mathbb{E}_{p_{\mathbf{x}}}\left[
    \left\|
    \frac{\nabla^{-}p(\mathbf{x};\boldsymbol{\theta})}
         {p(\mathbf{x};\boldsymbol{\theta})}
    -
    \frac{\nabla^{-}p_{\mathbf{x}}(\mathbf{x})}
         {p_{\mathbf{x}}(\mathbf{x})}
    \right\|^2
    \right],
\end{aligned}
\end{equation}
with $|| \cdot ||$ denoting the Euclidean norm and $\mathbb{E}_{p_{\mathbf{x}}}$ denoting expectation with respect to $p_{\mathbf{x}}$. Here, $\nabla^{-} h(\mathbf{x}) := [h(\mathbf{x}) - h(\mathbf{x}^{1-}), \ldots, h(\mathbf{x}) - h(\mathbf{x}^{m-})]^\top \in \mathbb{R}^m$ denotes the backward difference operator for $h : \mathcal{X} \to \mathbb{R}$, where $\mathbf{x}^{j-}$ is obtained by replacing the $j$th coordinate with its predecessor. For a count coordinate, the predecessor decreases the coordinate by one, with a predecessor below zero assigned function value zero, while the successor $\mathbf{x}^{j+}$ increases it by one. For a finite coordinate, the predecessor and successor are defined cyclically and are inverse operations. The normalizing constant cancels out in $J(\boldsymbol{\theta})$, but the objective still involves the unknown probability mass function $p_{\mathbf{x}}(\mathbf{x})$. Under suitable regularity conditions \citep{matsubara2024generalized}, $J(\boldsymbol{\theta})$ in \eqref{eq:target_parameter} can be written, up to an additive constant independent of $\boldsymbol{\theta}$, as
\begin{equation}
\label{eq:dfd_computable}
J(\boldsymbol{\theta})
=
\mathbb{E}_{p_{\mathbf{x}}}
\left[
\sum_{j=1}^{m}
\left\{
\left(
\frac{p(\mathbf{x}^{j-};\boldsymbol{\theta})}
     {p(\mathbf{x};\boldsymbol{\theta})}
\right)^2
-
2
\left(
\frac{p(\mathbf{x};\boldsymbol{\theta})}
     {p(\mathbf{x}^{j+};\boldsymbol{\theta})}
\right)
\right\}
\right].
\end{equation}
Since the omitted term does not depend on $\boldsymbol{\theta}$, it does not affect the minimizer, and we henceforth use $J(\boldsymbol{\theta})$ to denote the computable criterion in \eqref{eq:dfd_computable}. Given $\mathbf{x}_{1:n} = (\mathbf{x}_1, \ldots, \mathbf{x}_n)$,
the sample analogue of $J(\boldsymbol{\theta})$ is
\begin{equation}
\label{eq:sample_loss}
\begin{aligned}
\widehat{J}(\boldsymbol{\theta};\mathbf{x}_{1:n})
&=
\frac{1}{n}\sum_{i=1}^n \sum_{j=1}^m
\Bigg\{
\left(
\frac{p(\mathbf{x}_i^{j-};\boldsymbol{\theta})}
     {p(\mathbf{x}_i;\boldsymbol{\theta})}
\right)^2 \\
&\qquad\qquad
-2
\left(
\frac{p(\mathbf{x}_i;\boldsymbol{\theta})}
     {p(\mathbf{x}_i^{j+};\boldsymbol{\theta})}
\right)
\Bigg\}.
\end{aligned}
\end{equation}
We denote the parameter value minimizing $\widehat{J}(\boldsymbol{\theta};\mathbf{x}_{1:n})$ by $\widehat{\boldsymbol{\theta}}_n$ and refer to it as the \textit{sample DFD estimator}.

\subsection{DFD-BETEL}

The target parameter $\boldsymbol{\theta}^*$ in \eqref{eq:target_parameter} is defined as the minimizer of the population criterion $J(\boldsymbol{\theta})$. To use this criterion within ETEL, we work with the first-order optimality condition of this minimization problem. If $\boldsymbol{\theta}^*$ is an interior minimizer and differentiation and expectation can be interchanged, $\boldsymbol{\theta}^*$ satisfies the population estimating equation
\begin{equation}
\label{eq:population_moment_condition}
\mathbb{E}_{p_{\mathbf{x}}}
\left\{
\mathbf{g}(\mathbf{x}, \boldsymbol{\theta}^*)
\right\}
=
\mathbf{0}_p,
\end{equation}
where $\mathbf{0}_p$ is the $p\times 1$ vector of zeros. Using the computable representation in \eqref{eq:dfd_computable}, define
\begin{equation}
\label{eq:dfd_moment}
\mathbf{g}(\mathbf{x},\boldsymbol{\theta})
=
\nabla_{\boldsymbol{\theta}}
\left[
\sum_{j=1}^m
\left\{
\left(
\frac{p(\mathbf{x}^{j-};\boldsymbol{\theta})}
     {p(\mathbf{x};\boldsymbol{\theta})}
\right)^2
-
2
\frac{p(\mathbf{x};\boldsymbol{\theta})}
     {p(\mathbf{x}^{j+};\boldsymbol{\theta})}
\right\}
\right],
\end{equation}
where $\nabla_{\boldsymbol{\theta}}$ denotes the gradient with respect to $\boldsymbol{\theta}$. As established above, the ratios appearing in \eqref{eq:dfd_moment} are free of the normalizing constant $Z(\boldsymbol{\theta})$. Motivated by this estimating equation, we propose \textbf{DFD-BETEL}, which incorporates the first-order optimality condition of the DFD criterion as moment restrictions in Bayesian ETEL. This construction combines the ability of DFD to circumvent intractable normalizing constants in discrete models with Bayesian ETEL to provide uncertainty quantification without learning-rate calibration. Equation \eqref{eq:population_moment_condition} defines a population moment condition. Given independent observations $\mathbf{x}_{1:n}$, we construct the corresponding sample moment functions $\mathbf{g}(\mathbf{x}_i,\boldsymbol{\theta})$, for $i=1,\ldots,n$. The ETEL function $L:\Theta\to[0,\infty)$ is constructed from probability weights obtained by solving a constrained optimization problem \citep{schennach2005bayesian}. The optimization problem is feasible when the origin lies in the interior of the convex hull $\mathcal{C}(\boldsymbol{\theta}):=\{\sum_{i=1}^n\omega_i\mathbf{g}(\mathbf{x}_i,\boldsymbol{\theta})
:\omega_i\geq0,\ \sum_{i=1}^n\omega_i=1\}$ of $\{\mathbf{g}(\mathbf{x}_i,\boldsymbol{\theta})\}_{i=1}^n$. Recent work on regularized ETEL provides a way to relax this convex-hull restriction \citep{kim2023regularized}. Here, we focus on the standard ETEL framework, for which the ETEL function is defined as
\begin{equation}
\label{eq:ETEL}
L(\boldsymbol{\theta})
=
\mathbbm{1}\left\{
\mathbf{0}_p\in
\operatorname{int}\mathcal{C}(\boldsymbol{\theta})
\right\}
\prod_{i=1}^n \omega_i(\boldsymbol{\theta}),
\end{equation}
where $\mathbbm{1}\{\cdot\}$ denotes the indicator function, and
$\boldsymbol{\omega}(\boldsymbol{\theta}) := \big(\omega_1(\boldsymbol{\theta}),\ldots, \omega_n(\boldsymbol{\theta})\big)$ solves
\begin{equation*}
\label{eq:optimization_problem}
\begin{aligned}
\max_{\boldsymbol{\omega} \in \mathcal{S}^{n-1}}
\quad &
\sum_{i=1}^n -\omega_i \log(n\omega_i) \\
\text{subject to}\quad &
\sum_{i=1}^n
\omega_i\,\mathbf{g}(\mathbf{x}_i,\boldsymbol{\theta})
=
\mathbf{0}_p,
\end{aligned}
\end{equation*}
where
$\mathcal{S}^{n-1} := \{v \in \mathbb{R}^n : v_i \geq 0,\ i=1,\ldots,n,\ \sum_{i=1}^n v_i=1\}$ denotes the $(n-1)$-dimensional unit simplex. Introducing Lagrange multipliers for the constraints, the weights $\boldsymbol{\omega}(\boldsymbol{\theta})$ admit the representation \citep{schennach2005bayesian}
\begin{equation*}
\begin{aligned}
\omega_i(\boldsymbol{\theta})
&=
\frac{
\exp\!\left(
\widehat{\boldsymbol{\lambda}}(\boldsymbol{\theta})^\top
\mathbf{g}(\mathbf{x}_i,\boldsymbol{\theta})
\right)}
{
\sum_{j=1}^n
\exp\!\left(
\widehat{\boldsymbol{\lambda}}(\boldsymbol{\theta})^\top
\mathbf{g}(\mathbf{x}_j,\boldsymbol{\theta})
\right)
}, \\
\widehat{\boldsymbol{\lambda}}(\boldsymbol{\theta})
&=
\argmin_{\boldsymbol{\lambda}\in\mathbb{R}^p}
\frac{1}{n}
\sum_{i=1}^n
\exp\!\left(
\boldsymbol{\lambda}^\top
\mathbf{g}(\mathbf{x}_i,\boldsymbol{\theta})
\right).
\end{aligned}
\end{equation*}
The weights $\boldsymbol{\omega}(\boldsymbol{\theta})$ can be interpreted as probability weights assigned to the sample observations that are closest to the empirical weights $(n^{-1},\ldots,n^{-1})$ in Kullback--Leibler (KL) divergence, subject to the weighted sample moment condition. Using $L(\boldsymbol{\theta})$ as a surrogate likelihood, we define
the DFD-BETEL posterior density as
\begin{equation}
\label{eq:dfd_betel_posterior}
\pi(\boldsymbol{\theta} \mid \mathbf{x}_{1:n})
=
\frac{
L(\boldsymbol{\theta})\pi(\boldsymbol{\theta})
}{
\int_{\Theta}
L(\boldsymbol{\eta})\pi(\boldsymbol{\eta})
\,d\boldsymbol{\eta}
},
\end{equation}
where $\pi(\boldsymbol{\theta})$ denotes the prior density. Posterior sampling can be carried out using a MH algorithm. Since $L(\boldsymbol{\theta})=0$ whenever the origin does not lie in the interior of $\mathcal{C}(\boldsymbol{\theta})$, any proposal violating this convex-hull condition is rejected. Algorithm~\ref{alg:mh_dfd_betel} in Appendix~\ref{app:algorithm} summarizes the MH sampler for the DFD-BETEL posterior. Having established the DFD-BETEL construction, we next study its large-sample properties.

\section{Theoretical Properties}
\label{sec:theory}

In this section, we study the large-sample properties of DFD-BETEL. Under an i.i.d. setting with fixed parameter dimension $p$, we establish consistency and asymptotic normality of the sample DFD estimator $\widehat{\boldsymbol{\theta}}_n$ and a BvM theorem for the DFD-BETEL posterior in \eqref{eq:dfd_betel_posterior}. Taken together, these results characterize both the sampling covariance of the DFD estimator and the asymptotic behavior of the DFD-BETEL posterior, providing a theoretical basis for its uncertainty quantification. We first introduce the notation and assumptions used throughout this section.

Let $\varphi(\mathbf{x},\boldsymbol{\theta})$ denote the expression inside the expectation in \eqref{eq:dfd_computable}, and let $\mathbf{g}(\mathbf{x},\boldsymbol{\theta}) =\nabla_{\boldsymbol{\theta}}\varphi(\mathbf{x},\boldsymbol{\theta})$ denote the corresponding estimating function in \eqref{eq:dfd_moment}. Define $\boldsymbol{\Delta}
=\mathbb{E}_{p_{\mathbf{x}}}
[\mathbf{g}(\mathbf{x},\boldsymbol{\theta}^*)
\mathbf{g}(\mathbf{x},\boldsymbol{\theta}^*)^\top]$ and
$\boldsymbol{\Gamma}
=\mathbb{E}_{p_{\mathbf{x}}}
[\partial\mathbf{g}(\mathbf{x},\boldsymbol{\theta}^*)/
\partial\boldsymbol{\theta}^\top]$. We also write $\ell_n(\boldsymbol{\theta})=\log L(\boldsymbol{\theta})$
for the log-ETEL function corresponding to \eqref{eq:ETEL}, and $\mathcal{N}(A;\boldsymbol{\mu},\boldsymbol{\Sigma})$ for $\Pr(\mathbf{Z}\in A)$ when $\mathbf{Z}\sim N(\boldsymbol{\mu},\boldsymbol{\Sigma})$. We impose the following assumptions:

\begin{assumption}[Identification]
\label{ass:identification}
The parameter space $\Theta\subset\mathbb{R}^p$ is compact, and the
population DFD criterion $J(\boldsymbol{\theta})$ has a unique
minimizer $\boldsymbol{\theta}^*\in\operatorname{int}(\Theta)$.
Moreover, $\varphi(\mathbf{x},\boldsymbol{\theta})$ is continuous in
$\boldsymbol{\theta}$ for each $\mathbf{x}\in\mathcal{X}$ and
$\mathbb{E}_{p_{\mathbf{x}}}
[\sup_{\boldsymbol{\theta}\in\Theta}
|\varphi(\mathbf{x},\boldsymbol{\theta})|]<\infty$.
\end{assumption}

\begin{assumption}[Local regularity]
\label{ass:regularity}
There exists an open neighborhood $\mathcal{N}$ of
$\boldsymbol{\theta}^*$ such that
$\varphi(\mathbf{x},\boldsymbol{\theta})$ is twice continuously
differentiable in $\boldsymbol{\theta}$ on $\mathcal{N}$ for each
$\mathbf{x}\in\mathcal{X}$,
$\mathbb{E}_{p_{\mathbf{x}}}
[\sup_{\boldsymbol{\theta}\in\mathcal{N}}
\|\mathbf{g}(\mathbf{x},\boldsymbol{\theta})\|^\alpha]<\infty$
for some $\alpha>2$, and
$\mathbb{E}_{p_{\mathbf{x}}}
[\sup_{\boldsymbol{\theta}\in\mathcal{N}}
\|\nabla_{\boldsymbol{\theta}}
\mathbf{g}(\mathbf{x},\boldsymbol{\theta})\|_F]<\infty$, where $\|\cdot\|_F$ denotes the Frobenius norm.
\end{assumption}

\begin{assumption}[Nonsingularity]
\label{ass:nonsingularity}
The matrix $\boldsymbol{\Delta}$ is nonsingular and
$\boldsymbol{\Gamma}$ has full rank $p$.
\end{assumption}

\begin{assumption}[Prior and ETEL regularity]
\label{ass:posterior}
The prior admits a density $\pi(\boldsymbol{\theta})$ that is
continuous and positive at $\boldsymbol{\theta}^*$. For every
$\delta>0$, there exists $\epsilon>0$ such that, as $n\to\infty$,
\[
\mathbb{P}\left\{
\sup_{\|\boldsymbol{\theta}-\boldsymbol{\theta}^*\|>\delta}
\frac{
\ell_n(\boldsymbol{\theta})
-\ell_n(\boldsymbol{\theta}^*)
}{n}
\leq -\epsilon
\right\}
\to1.
\]
Moreover, there exists a neighborhood $\mathcal{N}_0$ of
$\boldsymbol{\theta}^*$ such that, with probability tending to one, $\widehat{\boldsymbol{\lambda}}(\boldsymbol{\theta})$ is well-defined and unique and $\mathbf{0}_p\in\operatorname{int}\mathcal{C}(\boldsymbol{\theta})$ for every $\boldsymbol{\theta}\in\mathcal{N}_0$.
\end{assumption}

Assumptions~\ref{ass:identification}--\ref{ass:nonsingularity} provide identification, smoothness, moment, and nonsingularity conditions for the DFD estimating equations, analogous to those used in \citet{newey2004higher,schennach2007point,chib2018bayesian}. Assumption~\ref{ass:posterior} imposes regularity on the prior, separation of the log-ETEL criterion, and local feasibility of the ETEL construction. Together, these conditions provide the regularity needed for the Bayesian ETEL asymptotic argument of \citet{chib2018bayesian}. We next state the BvM theorem.

\begin{theorem}[Bernstein--von Mises theorem for DFD-BETEL]
\label{thm:bvm}
Under Assumptions~\ref{ass:identification}--\ref{ass:posterior}, the DFD-BETEL posterior defined in \eqref{eq:dfd_betel_posterior} satisfies
\begin{equation}
\label{eq:bvm}
\sup_{B}
\left|
\pi\left(
\sqrt{n}
(\boldsymbol{\theta}-\widehat{\boldsymbol{\theta}}_n)
\in B
\mid \mathbf{x}_{1:n}
\right)
-
\mathcal{N}(B;\mathbf{0}_p,\boldsymbol{\Sigma})
\right|
\overset{p}{\longrightarrow}0,
\end{equation}
where the supremum is taken over all Borel sets $B\subseteq\mathbb{R}^p$,
\[
\boldsymbol{\Sigma}
=
(\boldsymbol{\Gamma}^{\top}
\boldsymbol{\Delta}^{-1}
\boldsymbol{\Gamma})^{-1}
=
\boldsymbol{\Gamma}^{-1}
\boldsymbol{\Delta}
\boldsymbol{\Gamma}^{-\top},
\]
and $\widehat{\boldsymbol{\theta}}_n$ is the sample DFD estimator. The convergence in \eqref{eq:bvm} is in probability under the data-generating distribution.
\end{theorem}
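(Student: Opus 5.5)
The plan is to follow the Bayesian ETEL argument of \citet{chib2018bayesian}, specialized to the exactly identified case: the number of moments equals $\dim\boldsymbol{\theta}=p$. The proof has three stages. First I establish consistency and asymptotic normality of $\widehat{\boldsymbol{\theta}}_n$. Then I expand $\ell_n$ locally around $\widehat{\boldsymbol{\theta}}_n$. Finally I use the separation condition in Assumption~\ref{ass:posterior} and the prior regularity to turn the local expansion into total variation convergence.

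\textbf{Step 1: the DFD estimator.} By Assumption~\ref{ass:identification} and a uniform law of large numbers (compactness, continuity, integrable envelope), $\sup_\Theta|\widehat J-J|\to0$ a.s. Since $\boldsymbol{\theta}^*$ is the unique minimizer, $\widehat{\boldsymbol{\theta}}_n\to\boldsymbol{\theta}^*$ in probability. Because $\boldsymbol{\theta}^*$ is interior, with probability tending to one $\widehat{\boldsymbol{\theta}}_n$ solves $\widehat{\mathbf g}(\boldsymbol{\theta}):=n^{-1}\sum_i\mathbf g(\mathbf x_i,\boldsymbol{\theta})=\mathbf 0_p$. A mean-value expansion, the dominated Hessian in Assumption~\ref{ass:regularity}, the CLT for $\sqrt n\,\widehat{\mathbf g}(\boldsymbol{\theta}^*)$ and Assumption~\ref{ass:nonsingularity} then give
\[
\sqrt n(\widehat{\boldsymbol{\theta}}_n-\boldsymbol{\theta}^*)=-\boldsymbol{\Gamma}^{-1}\sqrt n\,\widehat{\mathbf g}(\boldsymbol{\theta}^*)+o_p(1)\rightsquigarrow N(\mathbf 0,\boldsymbol{\Gamma}^{-1}\boldsymbol{\Delta}\boldsymbol{\Gamma}^{-\top}).
\]
Since $\boldsymbol{\Gamma}$ is square and invertible, this covariance equals $(\boldsymbol{\Gamma}^\top\boldsymbol{\Delta}^{-1}\boldsymbol{\Gamma})^{-1}$, which is the identity claimed for $\boldsymbol{\Sigma}$.

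\textbf{Step 2: local expansion of the log-ETEL.} On $\mathcal N_0$, with probability tending to one, $\widehat{\boldsymbol{\lambda}}(\boldsymbol{\theta})$ exists and is unique, and
\[
\ell_n(\boldsymbol{\theta})=-n\log n+n\,\widehat{\boldsymbol{\lambda}}^\top\widehat{\mathbf g}(\boldsymbol{\theta})-n\log\Big(n^{-1}\textstyle\sum_i e^{\widehat{\boldsymbol{\lambda}}^\top\mathbf g_i}\Big).
\]
At $\widehat{\boldsymbol{\theta}}_n$ we have $\widehat{\mathbf g}=\mathbf 0$, so $\widehat{\boldsymbol{\lambda}}=\mathbf 0$, all weights equal $1/n$, and $\ell_n$ attains its global maximum $-n\log n$. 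For $\|\boldsymbol{\theta}-\widehat{\boldsymbol{\theta}}_n\|=O(n^{-1/2})$, the first-order condition $n^{-1}\sum_i e^{\boldsymbol{\lambda}^\top\mathbf g_i}\mathbf g_i=\mathbf 0$ gives
\[
\widehat{\boldsymbol{\lambda}}(\boldsymbol{\theta})=-\widehat{\boldsymbol{\Delta}}(\boldsymbol{\theta})^{-1}\widehat{\mathbf g}(\boldsymbol{\theta})+O_p(\|\widehat{\mathbf g}\|^2\max_i\|\mathbf g_i\|).
\]
The $\alpha>2$ moment bound in Assumption~\ref{ass:regularity} yields $\max_i\sup_{\mathcal N}\|\mathbf g_i\|=o_p(n^{1/\alpha})=o_p(n^{1/2})$, which keeps this remainder negligible. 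Substituting, we get $\ell_n(\boldsymbol{\theta})-\ell_n(\widehat{\boldsymbol{\theta}}_n)=-\tfrac n2\widehat{\mathbf g}(\boldsymbol{\theta})^\top\boldsymbol{\Delta}^{-1}\widehat{\mathbf g}(\boldsymbol{\theta})(1+o_p(1))$. Writing $\widehat{\mathbf g}(\boldsymbol{\theta})=\boldsymbol{\Gamma}(\boldsymbol{\theta}-\widehat{\boldsymbol{\theta}}_n)(1+o_p(1))$ and setting $\mathbf h=\sqrt n(\boldsymbol{\theta}-\widehat{\boldsymbol{\theta}}_n)$ gives
\[
\ell_n(\widehat{\boldsymbol{\theta}}_n+\mathbf h/\sqrt n)-\ell_n(\widehat{\boldsymbol{\theta}}_n)=-\tfrac12\mathbf h^\top\boldsymbol{\Sigma}^{-1}\mathbf h+o_p(1)
\]
uniformly on compacts in $\mathbf h$. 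To control the tails I would also extend this to a quadratic lower and upper bound $-c_1\|\mathbf h\|^2\le\cdot\le-c_2\|\mathbf h\|^2$ for $\|\mathbf h\|\le\delta\sqrt n$, using a convexity argument on the dual objective, as in \citet{chib2018bayesian}.

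\textbf{Step 3: posterior convergence.} I would show $\int|\pi_n(\mathbf h)-\phi_{\boldsymbol{\Sigma}}(\mathbf h)|\,d\mathbf h\to0$, where $\pi_n(\mathbf h)\propto L(\widehat{\boldsymbol{\theta}}_n+\mathbf h/\sqrt n)/L(\widehat{\boldsymbol{\theta}}_n)\cdot\pi(\widehat{\boldsymbol{\theta}}_n+\mathbf h/\sqrt n)$. Total variation convergence dominates the supremum over Borel sets in \eqref{eq:bvm}. I split the integral into three regions.
\begin{itemize}
\item $\|\mathbf h\|\le M$: handled by the Step 2 expansion and continuity and positivity of $\pi$ at $\boldsymbol{\theta}^*$.
\item $M<\|\mathbf h\|\le\delta\sqrt n$: handled by the quadratic bound, making the contribution small as $M\to\infty$.
\item $\|\boldsymbol{\theta}-\boldsymbol{\theta}^*\|>\delta$: the separation condition bounds the ratio by $e^{-n\epsilon}$. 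After the Jacobian $n^{p/2}$ and integration against a proper prior, this contribution is $o_p(1)$.
\end{itemize}
A Scheffé or dominated convergence argument on the normalizing constants then completes the proof.

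The main obstacle is Step 2's uniform control of $\widehat{\boldsymbol{\lambda}}(\boldsymbol{\theta})$ over the intermediate shell $M<\|\mathbf h\|\le\delta\sqrt n$. There the multiplier is no longer $O_p(n^{-1/2})$, and the quadratic approximation must be replaced by a one-sided bound. I would first try a bound of the form $\ell_n(\boldsymbol{\theta})-\ell_n(\widehat{\boldsymbol{\theta}}_n)\le-c\,n\min(\|\widehat{\mathbf g}(\boldsymbol{\theta})\|^2,\|\widehat{\mathbf g}(\boldsymbol{\theta})\|)$. This follows by evaluating the dual at a restricted $\boldsymbol{\lambda}=-t\,\widehat{\mathbf g}/\|\widehat{\mathbf g}\|$ and using the lower bound $e^u\ge1+u$ together with the moment bounds. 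Combined with $\|\widehat{\mathbf g}(\boldsymbol{\theta})\|\ge c'\|\boldsymbol{\theta}-\widehat{\boldsymbol{\theta}}_n\|$ on the shell, which comes from the full rank of $\boldsymbol{\Gamma}$, this gives the required domination.
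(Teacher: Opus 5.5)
Your architecture is the same as the paper's. You establish consistency and the linear representation of $\widehat{\boldsymbol\theta}_n$ (Theorems~\ref{thm:dfd_consistency}--\ref{thm:dfd_asymptotic_normality}). You use $\widehat{\boldsymbol\lambda}(\widehat{\boldsymbol\theta}_n)=\mathbf 0_p$, so $L(\widehat{\boldsymbol\theta}_n)=n^{-n}$ is the global maximum. You expand locally on compacts (Lemma~\ref{lem:dfd_betel_local}), and split into three regions, killing the far region with the separation condition plus $\ell_n(\boldsymbol\theta^*)\le\ell_n(\widehat{\boldsymbol\theta}_n)$.

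The one step where you go beyond the paper is an explicit bound on the shell $M<\|\mathbf h\|\le\delta\sqrt n$, which the paper simply imports from \citet{chib2018bayesian}. That step fails as written. The ETEL log-ratio is $\ell_n(\boldsymbol\theta)-\ell_n(\widehat{\boldsymbol\theta}_n)=\sum_i\log(n\omega_i)=-n\,\mathrm{KL}(\mathbf u\,\|\,\boldsymbol\omega)$ with $\mathbf u=(n^{-1},\ldots,n^{-1})$. The optimal value of the ET dual, by contrast, is $Q_n(\widehat{\boldsymbol\lambda})=n^{-1}\sum_i e^{\widehat{\boldsymbol\lambda}^\top\mathbf g_i}=\exp\{-\mathrm{KL}(\boldsymbol\omega\,\|\,\mathbf u)\}$. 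Plugging a restricted $\boldsymbol\lambda$ into $Q_n$ therefore only lower-bounds the reverse divergence $\mathrm{KL}(\boldsymbol\omega\|\mathbf u)$. Since $\mathrm{KL}(\mathbf u\|\boldsymbol\omega)+\mathrm{KL}(\boldsymbol\omega\|\mathbf u)=-\widehat{\boldsymbol\lambda}^\top\widehat{\mathbf g}$, nothing about $\ell_n$ follows without separate control of $\widehat{\boldsymbol\lambda}(\boldsymbol\theta)$ on the shell, and that control is exactly what is hard there. In addition, upper-bounding $Q_n$ would require an upper bound on the exponential, whereas $e^u\ge 1+u$ points the other way.

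The repair is to compare with empirical likelihood. The ET weights are feasible for the EL program. For any feasible $\boldsymbol\omega$ and any $\boldsymbol\lambda$ with $1+\boldsymbol\lambda^\top\mathbf g_i>0$ for all $i$, Jensen's inequality gives $n^{-1}\sum_i\log\{n\omega_i(1+\boldsymbol\lambda^\top\mathbf g_i)\}\le\log(1+\boldsymbol\lambda^\top\sum_i\omega_i\mathbf g_i)=0$. Hence
\[
\ell_n(\boldsymbol\theta)-\ell_n(\widehat{\boldsymbol\theta}_n)\le-\sum_{i=1}^n\log\{1+\boldsymbol\lambda^\top\mathbf g(\mathbf x_i,\boldsymbol\theta)\}\quad\text{for every such }\boldsymbol\lambda .
\]
Now take $\boldsymbol\lambda=+t\,\widehat{\mathbf g}/\|\widehat{\mathbf g}\|$ with $t\le(2\max_i\|\mathbf g_i\|)^{-1}$; note the sign, since your $-t$ direction is the descent direction for the ET dual, not for this bound. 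Using $\log(1+x)\ge x-x^2$ for $|x|\le1/2$ gives
\[
\ell_n(\boldsymbol\theta)-\ell_n(\widehat{\boldsymbol\theta}_n)\le-\frac n4\min\left\{\frac{\|\widehat{\mathbf g}(\boldsymbol\theta)\|^2}{\widehat\sigma^2},\ \frac{\|\widehat{\mathbf g}(\boldsymbol\theta)\|}{\max_i\|\mathbf g_i\|}\right\},\qquad \widehat\sigma^2=P_n\sup_{\mathcal N}\|\mathbf g\|^2=O_p(1).
\]
The linear branch carries $1/\max_i\|\mathbf g_i\|$, not a constant as in your display, and this is where $\alpha>2$ enters. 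We have $\max_i\sup_{\mathcal N}\|\mathbf g_i\|=o_p(n^{1/\alpha})$, and your bound gives $\|\widehat{\mathbf g}(\boldsymbol\theta)\|\ge c'\|\mathbf h\|/\sqrt n$. Together these make the log-ratio at most $-\min\{c\|\mathbf h\|^2,\,c\,n^{1/2-1/\alpha}\|\mathbf h\|\}$ with probability tending to one. Its exponential integrates to something that vanishes over $\|\mathbf h\|>M$ as $n\to\infty$ and then $M\to\infty$.

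With this fix, your route gives a self-contained treatment of the intermediate region using only the moment and rank conditions of Assumptions~\ref{ass:regularity}--\ref{ass:nonsingularity}, where the paper only cites \citet{chib2018bayesian}. The two-sided quadratic bound you also propose is unnecessary, since only the upper bound is used.
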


The above theorem implies that the total variation distance between the DFD-BETEL posterior $\pi(\boldsymbol{\theta}\mid\mathbf{x}_{1:n})$ and a Gaussian distribution centered at the sample DFD estimator $\widehat{\boldsymbol{\theta}}_n$ with covariance
$n^{-1}\boldsymbol{\Sigma} =n^{-1}(\boldsymbol{\Gamma}^{\top}
\boldsymbol{\Delta}^{-1}\boldsymbol{\Gamma})^{-1}$ converges in probability to zero. To assess whether the posterior uncertainty reflects the sampling variability of the sample DFD estimator, we compare their limiting covariance matrices. Theorem~\ref{thm:dfd_asymptotic_normality} in Appendix~\ref{app:asymptotic_normality} shows that
\begin{equation}
\label{eq:dfd_asymptotic_normality}
\sqrt{n}
(\widehat{\boldsymbol{\theta}}_n-\boldsymbol{\theta}^*)
\overset{d}{\longrightarrow}
N\left(
\mathbf{0}_p,
\boldsymbol{\Gamma}^{-1}
\boldsymbol{\Delta}
\boldsymbol{\Gamma}^{-\top}
\right),
\end{equation}
where $\overset{d}{\longrightarrow}$ denotes convergence in
distribution. Comparing \eqref{eq:bvm} and \eqref{eq:dfd_asymptotic_normality}, the limiting covariance of the DFD-BETEL posterior coincides with the asymptotic sampling covariance of the sample DFD estimator. Consequently, nominal $100 \times (1-\alpha)\%$ DFD-BETEL credible sets are asymptotically valid $100 \times (1-\alpha)\%$ frequentist confidence sets for the target parameter $\boldsymbol{\theta}^*$. Importantly, this covariance matching holds whether the working model is correctly specified or misspecified. To see why, suppose the working model is misspecified. Then $\boldsymbol{\theta}^*$ represents the DFD pseudo-true parameter rather than a parameter of the data-generating distribution. Since $\boldsymbol{\theta}^*$ is an interior minimizer of the population DFD criterion, it still satisfies the moment condition $\mathbb{E}_{p_{\mathbf{x}}}
[\mathbf{g}(\mathbf{X},\boldsymbol{\theta}^*)]=\mathbf{0}_p$. Thus, DFD-BETEL retains asymptotically calibrated uncertainty for $\boldsymbol{\theta}^*$ even when the working model is misspecified. Moreover, unlike generalized Bayesian inference based directly on the DFD loss, which we refer to as \textit{DFD-Bayes}, DFD-BETEL achieves this calibration without introducing a learning-rate parameter.

\paragraph{Comparison with DFD-Bayes and Standard Bayes.}
We next compare the uncertainty quantification of DFD-BETEL with DFD-Bayes and standard Bayesian inference under model misspecification. We first contrast their asymptotic posterior variances and then use a simple discrete example to illustrate the resulting differences in frequentist coverage.

DFD-Bayes constructs a generalized posterior based on the DFD loss,
\[
\pi_{\eta}(\boldsymbol{\theta}\mid\mathbf{x}_{1:n})
\propto
\pi(\boldsymbol{\theta})
\exp\{-\eta n\widehat{J}_n(\boldsymbol{\theta})\},
\]
where $\eta>0$ is a constant learning rate to be specified \citep{bissiri2016general,matsubara2024generalized}. The learning rate controls the concentration of the generalized posterior and therefore requires calibration for uncertainty quantification \citep{matsubara2024generalized}. In the one-dimensional case, where $\Delta$ and $\Gamma$ denote the scalar counterparts of $\boldsymbol{\Delta}$ and $\boldsymbol{\Gamma}$, the asymptotic sampling variance of the sample DFD estimator is $\Delta/(n\Gamma^2)$, whereas the DFD-Bayes posterior has variance $(\eta\Gamma)^{-1}/n$ under standard regularity conditions. Thus, matching the posterior variance to the sampling variance of the DFD estimator requires an appropriate choice of $\eta$. In contrast, DFD-BETEL recovers the sandwich covariance through the ETEL construction without introducing a learning-rate parameter.

We next illustrate these differences, together with the behavior of standard Bayesian inference, through a simple misspecified discrete example. Suppose the data-generating distribution is geometric with success probability $1/2$, so that $p_{\mathbf{x}}(x)=2^{-(x+1)}$ for $x\in\mathbb{N}_0$, and consider the Poisson working model $p(x;\theta)=e^{-\theta}\theta^x/x!$. We take $\Theta=[10^{-6},10]$ and assign a
$\operatorname{Gamma}(1,1)$ prior truncated to $\Theta$, using the shape-rate parameterization. The working model is misspecified since the geometric distribution does not belong to the Poisson family.

For the Poisson working model, the corresponding DFD loss yields the pseudo-true parameter $\theta^*_{\mathrm{DFD}}=3/2$. At this target, $\Gamma=32/27$ and $\Delta=128/9$, so the limiting variance of $\sqrt{n}(\widehat{\theta}_n-\theta^*_{\mathrm{DFD}})$ is $\Delta/\Gamma^2=81/8$, corresponding to an asymptotic sampling variance $81/(8n)$ for $\widehat{\theta}_n$. By Theorem~\ref{thm:bvm}, the DFD-BETEL posterior has the same first-order variance $81/(8n)$, and hence its nominal 95\% credible interval has asymptotic frequentist coverage 95\% for $\theta^*_{\mathrm{DFD}}$.

For DFD-Bayes, the first-order posterior variance is $27/(32\eta n)$. This variance matches the asymptotic sampling variance of the sample DFD estimator when $\eta=1/12$. If instead $\eta=1$, the posterior is substantially more concentrated, and its nominal 95\% credible interval has asymptotic frequentist coverage
\[
\Pr\left(|Z|<\frac{1.96}{\sqrt{12}}\right)
\approx 0.43,
\qquad Z\sim N(0,1).
\]
Thus, the uncertainty quantification of DFD-Bayes depends critically on the choice of learning rate, whereas DFD-BETEL achieves covariance matching without learning-rate calibration.

The usual Bayesian posterior provides a complementary comparison. Under model misspecification, its pseudo-true parameter $\theta^\dagger$ is defined as the minimizer of the KL divergence from the data-generating distribution to the working model. In this example, $\theta^\dagger=1$. The standard Bayes posterior has variance $1/n$, whereas the sampling variance of the maximum likelihood estimator is $2/n$ \citep{white1982maximum,ghosal2017fundamentals}. Consequently, targeting its own pseudo-true parameter $\theta^\dagger$, a nominal 95\% Bayesian credible interval substantially undercovers, with asymptotic frequentist coverage
\[
\Pr\left(|Z|<\frac{1.96}{\sqrt{2}}\right)
\approx 0.83,
\qquad Z\sim N(0,1).
\]

Under correct specification, by contrast, the corresponding posterior and sampling covariances agree asymptotically, so likelihood-based Bayesian inference and appropriately calibrated DFD-based procedures yield nominal frequentist coverage under their respective regularity conditions. DFD-based inference may nevertheless be less statistically efficient than likelihood-based inference because it does not exploit the information contained in the normalizing constant; related efficiency properties of score matching have been studied by \citet{koehler2022statistical}. In the following section, we complement these theoretical calibration results with simulation studies under both correct specification and model misspecification.

\section{Experiments}
\label{sec:experiments}

We evaluate DFD-BETEL on discrete doubly-intractable models under correct specification and misspecification, focusing on estimation accuracy and uncertainty quantification. We compare DFD-BETEL with DFD-Bayes \citep{matsubara2024generalized}, double Metropolis--Hastings (DMH) \citep{liang2010double}, and, when feasible, standard Bayesian inference. DMH provides a likelihood-based benchmark for doubly-intractable models, while standard Bayes is included when the normalizing constant can be accurately approximated. We consider a correctly specified Poisson graphical model, a misspecified Potts model, a sensitivity analysis of DFD-Bayes to the learning rate, and an automobile-insurance application using the Conway--Maxwell--Poisson (CMP) model. Additional simulation results under correct specification and misspecification are provided in Appendices~\ref{app:correct_specification} and~\ref{app:pg2_misspecified}, respectively, with further learning-rate sensitivity results in Appendix~\ref{app:sensitivity}.

\paragraph{Simulation setup.}
Each simulation setting uses 500 independently generated datasets, with all competing methods applied to the same datasets. Posterior means are used as point estimates. Under correct specification, we evaluate estimation accuracy using bias, mean absolute deviation (MAD), and mean squared error (MSE). Under misspecification, we assess uncertainty quantification using empirical coverage probabilities of nominal 95\% credible intervals, with coverage evaluated relative to each method's corresponding pseudo-true parameter. Model- and method-specific computational details are provided in the Appendix~\ref{app:computation}.

\subsection{Correctly Specified Poisson Graphical Model}
\label{sec:pg2}

\begin{table*}[t]
\caption{Correctly specified Poisson graphical model: bias, MAD, and
MSE of posterior mean estimates over 500 replications. Bias and MAD
are multiplied by $10^2$, and MSE by $10^3$.}
\label{tab:pg2_correct_main}
\centering\small
\setlength{\tabcolsep}{4pt}
\begin{tabular}{rlrrrrrrrrr}
\toprule
& & \multicolumn{3}{c}{$a_1$}
& \multicolumn{3}{c}{$a_2$}
& \multicolumn{3}{c}{$\gamma_{12}$} \\
\cmidrule(lr){3-5}\cmidrule(lr){6-8}\cmidrule(lr){9-11}
$n$ & Method
& Bias & MAD & MSE
& Bias & MAD & MSE
& Bias & MAD & MSE \\
\midrule
200 & DFD-BETEL & -0.89 & 6.65 & 7.11
                 & -0.73 & 5.01 & 4.06
                 & -0.12 & 0.91 & 0.13 \\
    & DFD-Bayes & -0.20 & 6.75 & 7.21
                 & -0.20 & 5.06 & 4.11
                 & -0.06 & 0.92 & 0.13 \\
    & DMH       & -1.53 & 5.92 & 5.61
                 & -1.19 & 4.54 & 3.36
                 & -0.19 & 0.77 & 0.10 \\
\midrule
500 & DFD-BETEL & -0.53 & 4.38 & 3.08
                 & -0.33 & 3.14 & 1.58
                 & -0.07 & 0.58 & 0.05 \\
    & DFD-Bayes & -0.16 & 4.40 & 3.11
                 & -0.05 & 3.16 & 1.59
                 & -0.03 & 0.58 & 0.05 \\
    & DMH       & -0.66 & 3.79 & 2.27
                 & -0.45 & 2.71 & 1.18
                 & -0.08 & 0.48 & 0.04 \\
\midrule
1000 & DFD-BETEL & -0.08 & 3.17 & 1.62
                  & -0.01 & 2.33 & 0.89
                  & -0.01 & 0.41 & 0.03 \\
     & DFD-Bayes & 0.16 & 3.18 & 1.62
                  & 0.16 & 2.32 & 0.89
                  & 0.02 & 0.41 & 0.03 \\
     & DMH       & 0.01 & 2.70 & 1.21
                  & 0.06 & 2.04 & 0.68
                  & 0.01 & 0.34 & 0.02 \\
\bottomrule
\end{tabular}
\end{table*}

Consider a random vector $\mathbf{x}=(x_1,\ldots,x_m)$ taking values in $\mathbb{N}_0^m$, and let $G=(V,E)$ denote an undirected graph with node set $V:=\{1,\ldots,m\}$ indexing the variables $\{x_j\}_{j=1}^m$. In an exponential-family graphical model \citep{yang2015graphical}, each node-conditional distribution $p(x_j\mid \mathbf{x}_{-j})$ belongs to a univariate exponential family, while the graph encodes the conditional dependence structure among the variables.

We consider the Poisson graphical model, for which the node-conditional distributions are Poisson. To provide a correctly specified benchmark, we consider the two-node model on $\mathbb{N}_0^2$ with parameter $\boldsymbol{\theta}=(a_1,a_2,\gamma_{12})$ and probability mass function
\begin{equation}
\label{eq:pg2_working}
p(\mathbf{x};\boldsymbol{\theta}) = \frac{\exp\{a_1x_1+a_2x_2-\gamma_{12}x_1x_2\}}{Z(\boldsymbol{\theta})x_1!x_2!}.
\end{equation}

The conditional distribution of $x_j$ given $x_{3-j}$ is Poisson with mean $\exp(a_j-\gamma_{12}x_{3-j})$, for $j=1,2$. The interaction parameter $\gamma_{12}>0$ induces a negative conditional dependence between the two counts and ensures normalizability of the joint distribution. We generate data from $\boldsymbol{\theta}_0=(\log 8,\log 10,0.05)$ with sample sizes $n=200,500,1000$. We assign independent $N(0,2.5^2)$ priors to $a_1$ and $a_2$, and a half-$N(0,1)$ prior to $\gamma_{12}$. Additional implementation details are provided in Appendix~\ref{app:pg2_correct_details}.

Table~\ref{tab:pg2_correct_main} summarizes the estimation results. All three methods provide accurate estimation under correct specification, with estimation errors decreasing as the sample size increases. DFD-BETEL and DFD-Bayes exhibit similar performance, but have larger MAD and MSE than DMH across the three parameters, reflecting the expected efficiency advantage of likelihood-based inference when the model is correctly specified. As we show in the following misspecified experiment, the advantage of DFD-BETEL becomes more apparent when robustness of uncertainty quantification is required.

\subsection{Misspecified Potts Model}
\label{sec:potts}

We next consider the Potts model \citep{potts1952some}, a multistate generalization of the Ising model \citep{ising1925beitrag} in which each lattice site can take more than two possible states. For an $M\times N$ lattice, let $\mathbf{x}=(x_1,\ldots,x_{MN})$, where $x_i\in\{1,\ldots,q\}$, and let $E$ denote the set of horizontal and vertical nearest-neighbor edges. We generate data from
\begin{equation}
\label{eq:potts_true}
f(\mathbf{x}) \propto \exp\left\{\alpha_0\sum_{i=1}^{MN}\mathbbm{1}\{x_i=1\} + \beta_0\sum_{(i,j)\in E}\mathbbm{1}\{x_i=x_j\}\right\},
\end{equation}
where $\alpha_0$ controls the preference for the first state and $\beta_0$ controls the tendency of neighboring sites to occupy the same state.

We set $q=3$, use a $10\times10$ lattice, and generate data with $(\alpha_0,\beta_0)=(0.4,0.5)$. Inference is based on the interaction-only working model
\begin{equation}
\label{eq:potts_working}
p(\mathbf{x};\theta) \propto \exp\left\{\theta\sum_{(i,j)\in E}\mathbbm{1}\{x_i=x_j\}\right\},
\end{equation}
so that omission of the external-field term induces model misspecification. We consider sample sizes $n=50,70,$ and $150$ and assign $\theta\sim\operatorname{Unif}(0,2)$. The DFD and KL pseudo-true parameters are defined as the minimizers of the population DFD and KL divergence, respectively. DFD-BETEL and DFD-Bayes coverage is evaluated with respect to the DFD target, while DMH coverage is evaluated with respect to the KL target.

\begin{figure*}[t]
    \centering
    \includegraphics[width=0.92\textwidth]
        {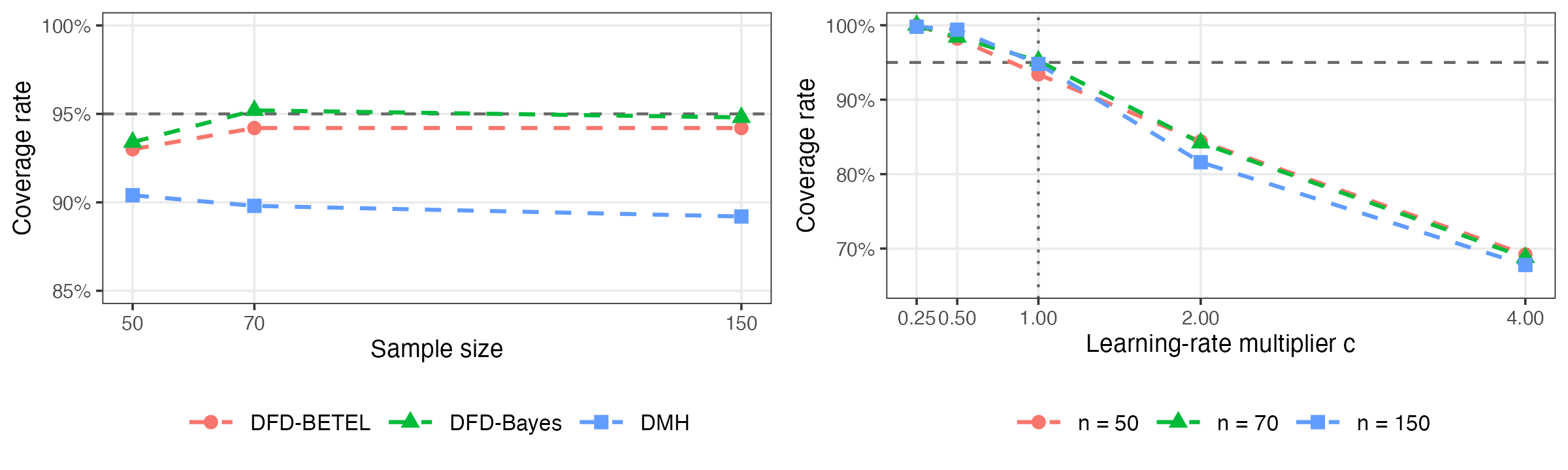}
    \caption{Empirical coverage probabilities of nominal 95\% credible intervals. Left: coverage of DFD-BETEL, DFD-Bayes, and DMH under the misspecified Potts model. Right: sensitivity of DFD-Bayes coverage to the learning-rate multiplier $c$, where $c=1$ corresponds to the calibrated learning rate. Horizontal dashed lines indicate the nominal 95\% coverage level.}
    \label{fig:potts_coverage_sensitivity}
\end{figure*}

The left panel of Figure~\ref{fig:potts_coverage_sensitivity} shows that DFD-BETEL maintains coverage close to the nominal 95\% level across all sample sizes. DFD-Bayes achieves similar coverage after learning-rate calibration, whereas DMH consistently undercovers its corresponding KL target. These results illustrate the uncertainty-calibration advantage of DFD-based inference under model misspecification. Unlike DFD-Bayes, however, DFD-BETEL achieves this calibration without requiring the choice of a learning rate. We examine the sensitivity of DFD-Bayes to this choice in the next experiment.

\subsection{Sensitivity to Learning-Rate Calibration in DFD-Bayes}
\label{sec:learning_rate}

The preceding experiment shows that DFD-Bayes can provide well-calibrated uncertainty with an appropriately chosen learning rate, but its uncertainty quantification depends on this choice. We therefore examine its sensitivity to the choice of learning rate.

For each replication, we first obtain a baseline learning rate $\widehat{\eta}$ using the calibration procedure of \citet{matsubara2024generalized} based on 100 nonparametric bootstrap samples. We then construct DFD-Bayes posteriors using learning rates $c\widehat{\eta}$ for $c\in\{0.25,0.5,1,2,4\}$, where $c=1$ corresponds to the calibrated learning rate. Within each replication, the same dataset and prior are used for all choices of $c$. We illustrate the sensitivity using the misspecified Potts model from the preceding experiment; additional sensitivity results are provided in Appendix~\ref{app:sensitivity}.

The right panel of Figure~\ref{fig:potts_coverage_sensitivity} shows that the uncertainty quantification of DFD-Bayes is sensitive to the choice of learning rate. Decreasing the learning rate produces wider, conservative credible intervals, whereas increasing it leads to substantial undercoverage. For example, at $n=150$, coverage decreases from $0.998$ at $c=0.25$ to $0.948$ at the calibrated value $c=1$ and $0.678$ at $c=4$. Although the bootstrap calibration procedure of \citet{matsubara2024generalized} performs favorably relative to alternative calibration approaches in their finite-sample experiments, they note that the bootstrap criterion for selecting the learning rate can fail in some settings, in which case the resulting generalized posterior may not be well calibrated. Similar sensitivity is observed in the additional settings reported in Appendix~\ref{app:sensitivity}. These results highlight the practical importance of learning-rate calibration for generalized Bayesian uncertainty quantification. In contrast, DFD-BETEL does not introduce a learning-rate parameter.

\subsection{Automobile Insurance Claim Data}
\label{sec:real_data}

We next apply DFD-BETEL to automobile-insurance claim data using the CMP model \citep{shmueli2005useful}, a flexible count model on $\mathbb{N}_0=\{0,1,2,\ldots\}$ with probability mass function
\begin{equation}
\label{eq:cmp_model}
p(x;\lambda,\nu) = \frac{1}{Z(\lambda,\nu)}\frac{\lambda^x}{(x!)^\nu},
\end{equation}
where $\lambda>0$ controls intensity and $\nu>0$ controls dispersion. The Poisson distribution is recovered when $\nu=1$, while $\nu>1$ and $\nu<1$ correspond to underdispersion and overdispersion, respectively. The normalizing constant $Z(\lambda,\nu)$ is an infinite series, making the CMP model a natural example of a discrete unnormalized model.

We analyze the automobile-insurance claim data of \citet{lemaire2012bonus}, which record the number of claims for $n=106{,}974$ policyholders from a Belgian automobile insurer. The sample mean and variance are $0.1011$ and $0.1074$, respectively, suggesting mild overdispersion relative to the Poisson model. We assign independent $\chi_3^2$ priors to $\lambda$ and $\nu$. Additional computational details and convergence diagnostics are provided in Appendix~\ref{app:cmp_insurance}.

The posterior means of $\lambda$ and $\nu$ are $0.0952$ and $0.3105$, with 95\% credible intervals $[0.0933,0.0972]$ and $[0.2140,0.4067]$, respectively. In particular, the credible interval for $\nu$ lies well below one, indicating overdispersion relative to the Poisson submodel. Figure~\ref{fig:cmp_insurance} compares the observed claim frequencies with those implied by the DFD-BETEL posterior. The fitted CMP distribution closely reproduces the observed distribution, particularly for zero, one, and two claims. Overall, this analysis illustrates the use of DFD-BETEL for Bayesian inference and uncertainty quantification in a real-world count model with an intractable normalizing constant, without requiring learning-rate calibration.

\begin{figure}[t]
\centering
\includegraphics[width=0.78\textwidth]
{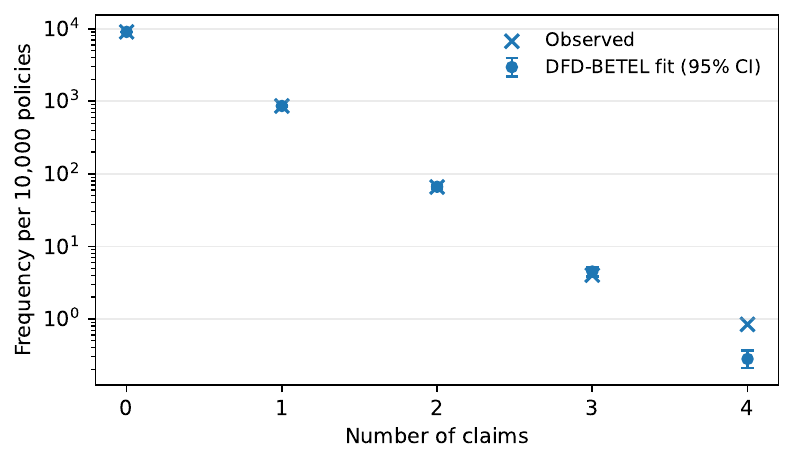}
\caption{Observed and DFD-BETEL fitted frequencies for the automobile claim data. Crosses denote observed frequencies, while filled circles and intervals denote posterior mean fitted frequencies and 95\% credible intervals induced by posterior uncertainty in $(\lambda,\nu)$. The vertical axis is displayed on a logarithmic scale.}
\label{fig:cmp_insurance}
\end{figure}

\section{Conclusion}
\label{sec:discussion}

We propose DFD-BETEL, a semiparametric Bayesian framework for discrete models with intractable normalizing constants. By incorporating estimating equations derived from discrete Fisher divergence into Bayesian ETEL, DFD-BETEL avoids evaluation of the normalizing constant and requires no learning-rate calibration. We establish consistency and asymptotic normality of the sample DFD estimator and a Bernstein--von Mises theorem for the DFD-BETEL posterior, whose limiting covariance matches the sampling covariance of the estimator. Numerical studies show that DFD-BETEL remains competitive under correct specification and provides reliable uncertainty quantification under model misspecification without learning-rate calibration. The automobile-insurance application further illustrates its use for Bayesian inference in real-world count data. Our results also suggest several directions for future work. DFD-based procedures may sacrifice efficiency relative to likelihood-based inference under correct specification, motivating a more systematic study of the efficiency of DFD estimating equations across discrete model classes. Extending DFD-BETEL to high-dimensional models, dependent observations, and alternative discrete score constructions provides promising directions for future research.

\section*{AI Use Statement}
Generative AI tools assisted with developing and debugging
simulation and analysis code, performing computational
checks, and reviewing mathematical assumptions and proof arguments. They also assisted with manuscript editing. The authors reviewed the AI-assisted content and take full responsibility for the final methodology, mathematical claims and proofs, experimental results, interpretations, and manuscript content.

\bibliography{dfd_betel}

\clearpage
\appendix
\counterwithin{theorem}{section}
\counterwithin{lemma}{section}
% \counterwithin{figure}{section}
% \counterwithin{table}{section}
% \counterwithin{algorithm}{section}
\section*{Supplementary Materials}
\addcontentsline{toc}{section}{Supplementary Materials}

\section{Theoretical Results and Proofs}
\label{app:theory}

\subsection{Consistency of the Sample DFD Estimator}
\label{app:consistency}

We first establish consistency of the sample DFD estimator $\widehat{\boldsymbol{\theta}}_n$, defined as a minimizer of the empirical DFD criterion $\widehat J_n(\boldsymbol{\theta})
=n^{-1}\sum_{i=1}^n \varphi(\mathbf{x}_i,\boldsymbol{\theta})$.

\begin{theorem}[Consistency of the sample DFD estimator]
\label{thm:dfd_consistency}
Suppose $\mathbf{x}_1,\ldots,\mathbf{x}_n$ are i.i.d. from $p_{\mathbf{x}}$. Under Assumption~\ref{ass:identification}, the sample DFD estimator satisfies
\[
\widehat{\boldsymbol{\theta}}_n
\overset{p}{\longrightarrow}
\boldsymbol{\theta}^*
\qquad \text{as } n\to\infty.
\]
\end{theorem}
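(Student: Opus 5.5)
The plan is the standard argmin (M-estimation) consistency argument: a uniform law of large numbers for $\widehat J_n$, combined with a well-separated minimum of $J$. Write $J(\boldsymbol{\theta})=\mathbb{E}_{p_{\mathbf{x}}}[\varphi(\mathbf{x},\boldsymbol{\theta})]$. This is finite for every $\boldsymbol{\theta}\in\Theta$ because of the envelope condition $\mathbb{E}_{p_{\mathbf{x}}}[\sup_{\boldsymbol{\theta}\in\Theta}|\varphi(\mathbf{x},\boldsymbol{\theta})|]<\infty$ in Assumption~\ref{ass:identification}. The same condition, together with pointwise continuity of $\varphi(\mathbf{x},\cdot)$ and dominated convergence, shows that $J$ is continuous on the compact set $\Theta$.

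\textbf{Step 1 (uniform LLN).} The goal is
\[
\sup_{\boldsymbol{\theta}\in\Theta}\bigl|\widehat J_n(\boldsymbol{\theta})-J(\boldsymbol{\theta})\bigr|\overset{p}{\longrightarrow}0 .
\]
The route is the classical bracketing argument (e.g.\ Newey--McFadden Lemma~2.4, or Tauchen's lemma).
\begin{itemize}
\item Fix $\boldsymbol{\theta}_0\in\Theta$ and a radius $r>0$, and define $u(\mathbf{x},\boldsymbol{\theta}_0,r)=\sup_{\|\boldsymbol{\theta}-\boldsymbol{\theta}_0\|\le r,\ \boldsymbol{\theta}\in\Theta}|\varphi(\mathbf{x},\boldsymbol{\theta})-\varphi(\mathbf{x},\boldsymbol{\theta}_0)|$.
\item By continuity, $u\to0$ pointwise in $\mathbf{x}$ as $r\downarrow0$. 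Since $u$ is dominated by $2\sup_{\Theta}|\varphi|$, dominated convergence gives $\mathbb{E}u(\mathbf{x},\boldsymbol{\theta}_0,r)\to0$.
\item Given $\varepsilon>0$, choose such radii and use compactness to cover $\Theta$ by finitely many balls $B(\boldsymbol{\theta}_k,r_k)$ with $\mathbb{E}u(\mathbf{x},\boldsymbol{\theta}_k,r_k)<\varepsilon$.
\item Apply the ordinary weak LLN at the finitely many centers $\boldsymbol{\theta}_k$, and to the finitely many functions $u(\cdot,\boldsymbol{\theta}_k,r_k)$.
\item Combining these gives $\sup_\Theta|\widehat J_n-J|\le 3\varepsilon$ with probability tending to one.
\end{itemize}
One technical point is measurability of the supremum. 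Because $\varphi(\mathbf{x},\cdot)$ is continuous and $\mathcal{X}$ is countable (so every function of $\mathbf{x}$ is measurable), the supremum can be taken over a countable dense subset of $\Theta$, and there is no issue.

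\textbf{Step 2 (well-separated minimum).} Fix $\delta>0$ and set $K_\delta=\{\boldsymbol{\theta}\in\Theta:\|\boldsymbol{\theta}-\boldsymbol{\theta}^*\|\ge\delta\}$. This set is compact and excludes the unique minimizer. Since $J$ is continuous, it attains its minimum on $K_\delta$, so
\[
\eta:=\inf_{K_\delta}J-J(\boldsymbol{\theta}^*)>0 .
\]

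\textbf{Step 3 (conclude).} On the event $\sup_\Theta|\widehat J_n-J|<\eta/2$, whose probability tends to one by Step~1, every $\boldsymbol{\theta}\in K_\delta$ satisfies
\[
\widehat J_n(\boldsymbol{\theta})>J(\boldsymbol{\theta})-\eta/2\ge J(\boldsymbol{\theta}^*)+\eta/2>\widehat J_n(\boldsymbol{\theta}^*).
\]
Hence no minimizer of $\widehat J_n$ lies in $K_\delta$, so $\Pr(\|\widehat{\boldsymbol{\theta}}_n-\boldsymbol{\theta}^*\|\ge\delta)\to0$. A minimizer $\widehat{\boldsymbol{\theta}}_n$ exists because $\widehat J_n$ is continuous on the compact set $\Theta$. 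If it is not unique, the argument applies to any measurable selection, or to any near-minimizer satisfying $\widehat J_n(\widehat{\boldsymbol{\theta}}_n)\le\inf_\Theta\widehat J_n+o_p(1)$.

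The main obstacle is Step~1, the uniform convergence. Its only real inputs are the envelope integrability and the continuity assumptions already contained in Assumption~\ref{ass:identification}, so it should go through routinely via the finite-cover argument.
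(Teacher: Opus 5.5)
Your proposal is correct and follows essentially the same route as the paper: a uniform law of large numbers from a finite cover of the compact $\Theta$, with a continuity modulus dominated by $2\sup_\Theta|\varphi|$ and driven to zero in mean by dominated convergence, followed by the well-separated-minimum argument. The differences are cosmetic. You use local brackets with the weak LLN, whereas the paper uses a single global modulus $w_\delta$ with the strong LLN. You also compare $\widehat J_n(\boldsymbol{\theta})>\widehat J_n(\boldsymbol{\theta}^*)$ directly on $K_\delta$, whereas the paper uses the bound $J(\widehat{\boldsymbol{\theta}}_n)-J(\boldsymbol{\theta}^*)\le 2\sup_\Theta|\widehat J_n-J|$.
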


\begin{proof}
Let $P_n f=n^{-1}\sum_{i=1}^n f(\mathbf{x}_i)$ and
$P f=\mathbb{E}_{p_{\mathbf{x}}}[f(\mathbf{x})]$. Then
$\widehat J_n(\boldsymbol{\theta})
=P_n\varphi(\cdot,\boldsymbol{\theta})$ and
$J(\boldsymbol{\theta})
=P\varphi(\cdot,\boldsymbol{\theta})$. We first establish uniform convergence of the empirical DFD criterion to its population counterpart. Define
\[
F(\mathbf{x}) = \sup_{\boldsymbol{\theta}\in\Theta}|\varphi(\mathbf{x},\boldsymbol{\theta})|.
\]
By Assumption~\ref{ass:identification}, $\mathbb{E}_{p_{\mathbf{x}}}[F(\mathbf{x})]<\infty$. Since $\varphi(\mathbf{x},\boldsymbol{\theta})$ is continuous in $\boldsymbol{\theta}$ for each $\mathbf{x}$ and $\Theta$ is compact, $\varphi(\mathbf{x},\cdot)$ is uniformly continuous on $\Theta$. For $\delta>0$, define
\[
w_\delta(\mathbf{x}) = \sup_{\substack{\boldsymbol{\theta},\boldsymbol{\eta}\in\Theta\\
\|\boldsymbol{\theta}-\boldsymbol{\eta}\|\leq\delta}}
|\varphi(\mathbf{x},\boldsymbol{\theta})
-\varphi(\mathbf{x},\boldsymbol{\eta})|.
\]
For each $\mathbf{x}$, $w_\delta(\mathbf{x})\to0$ as $\delta\downarrow0$, while $w_\delta(\mathbf{x})\leq2F(\mathbf{x})$. Dominated convergence therefore gives
\[
\mathbb{E}_{p_{\mathbf{x}}}[w_\delta(\mathbf{x})]
\longrightarrow0
\qquad (\delta\downarrow0).
\]
For fixed $\delta>0$, compactness of $\Theta$ guarantees a finite $\delta$-net $\{\boldsymbol{\theta}_1,\ldots, \boldsymbol{\theta}_{K_\delta}\}$. For every $\boldsymbol{\theta}\in\Theta$, choose $\boldsymbol{\theta}_k$ satisfying $\|\boldsymbol{\theta}-\boldsymbol{\theta}_k\|\leq\delta$. It follows that
\[
\begin{aligned}
\sup_{\boldsymbol{\theta}\in\Theta}
|(P_n-P)\varphi(\cdot,\boldsymbol{\theta})|
\leq {}&
\max_{1\leq k\leq K_\delta}
|(P_n-P)\varphi(\cdot,\boldsymbol{\theta}_k)| \\
&+P_nw_\delta
+\mathbb{E}_{p_{\mathbf{x}}}[w_\delta(\mathbf{x})].
\end{aligned}
\]
For each fixed $\delta$, the strong law of large numbers implies
\[
\max_{1\leq k\leq K_\delta}
|(P_n-P)\varphi(\cdot,\boldsymbol{\theta}_k)|
\longrightarrow0
\]
almost surely, and $P_nw_\delta\to \mathbb{E}_{p_{\mathbf{x}}}[w_\delta(\mathbf{x})]$ almost surely. Now set $\delta_m=1/m$, $m\geq1$. On the intersection of the corresponding probability-one events,
\[
\limsup_{n\to\infty}
\sup_{\boldsymbol{\theta}\in\Theta}
|(P_n-P)\varphi(\cdot,\boldsymbol{\theta})|
\leq
2\mathbb{E}_{p_{\mathbf{x}}}[w_{\delta_m}(\mathbf{x})]
\]
for every $m$. Letting $m\to\infty$ yields
\[
\sup_{\boldsymbol{\theta}\in\Theta}
|\widehat J_n(\boldsymbol{\theta})
-J(\boldsymbol{\theta})|
\longrightarrow0
\]
almost surely. We next use identification of the population minimizer. Pointwise continuity of $\varphi(\mathbf{x},\boldsymbol{\theta})$ together with the integrable envelope $F(\mathbf{x})$ implies continuity of $J(\boldsymbol{\theta})$ by dominated convergence. For any $\epsilon>0$, define
\[
A_\epsilon = \{\boldsymbol{\theta}\in\Theta: \|\boldsymbol{\theta}-\boldsymbol{\theta}^*\|\geq\epsilon\}.
\]
If $A_\epsilon$ is nonempty, compactness of $\Theta$, continuity of $J$, and uniqueness of $\boldsymbol{\theta}^*$ imply
\[
c_\epsilon =
\inf_{\boldsymbol{\theta}\in A_\epsilon}
\{J(\boldsymbol{\theta})-J(\boldsymbol{\theta}^*)\}
>0.
\]
Since $\widehat{\boldsymbol{\theta}}_n$ minimizes $\widehat J_n$, $\widehat J_n(\widehat{\boldsymbol{\theta}}_n)\leq\widehat J_n(\boldsymbol{\theta}^*)$. Hence,
\[
\begin{aligned}
0
\leq
J(\widehat{\boldsymbol{\theta}}_n)-J(\boldsymbol{\theta}^*)
&\leq
|J(\widehat{\boldsymbol{\theta}}_n)
-\widehat J_n(\widehat{\boldsymbol{\theta}}_n)| \\
&\quad+
|\widehat J_n(\boldsymbol{\theta}^*)
-J(\boldsymbol{\theta}^*)| \\
&\leq
2\sup_{\boldsymbol{\theta}\in\Theta}
|\widehat J_n(\boldsymbol{\theta})
-J(\boldsymbol{\theta})|.
\end{aligned}
\]
The right-hand side converges almost surely to zero. On the event $\widehat{\boldsymbol{\theta}}_n\in A_\epsilon$, however, the left-hand side is at least $c_\epsilon$. Therefore,
\[
\mathbb{P}
\left(
\|\widehat{\boldsymbol{\theta}}_n
-\boldsymbol{\theta}^*\|\geq\epsilon
\right)
\longrightarrow0.
\]
Since $\epsilon>0$ is arbitrary, $\widehat{\boldsymbol{\theta}}_n \overset{p}{\longrightarrow}\boldsymbol{\theta}^*$.
\end{proof}

\subsection{Asymptotic Normality of the Sample DFD Estimator}
\label{app:asymptotic_normality}

\begin{theorem}[Asymptotic normality of the sample DFD estimator]
\label{thm:dfd_asymptotic_normality}
Suppose $\mathbf{x}_1,\ldots,\mathbf{x}_n$ are i.i.d. from $p_{\mathbf{x}}$. Under Assumptions~\ref{ass:identification}--\ref{ass:nonsingularity},
\[
\sqrt{n}
(\widehat{\boldsymbol{\theta}}_n-\boldsymbol{\theta}^*)
\overset{d}{\longrightarrow}
N\left(
\mathbf{0}_p,
\boldsymbol{\Gamma}^{-1}
\boldsymbol{\Delta}
\boldsymbol{\Gamma}^{-\top}
\right),
\]
where $\overset{d}{\longrightarrow}$ denotes convergence in distribution, $\boldsymbol{\Delta} =\mathbb{E}_{p_{\mathbf{x}}}
[\mathbf{g}(\mathbf{x},\boldsymbol{\theta}^*)\mathbf{g}(\mathbf{x},\boldsymbol{\theta}^*)^\top]$, and $\boldsymbol{\Gamma} =\mathbb{E}_{p_{\mathbf{x}}}[\partial\mathbf{g}(\mathbf{x},\boldsymbol{\theta}^*)/
\partial\boldsymbol{\theta}^{\top}]$.
\end{theorem}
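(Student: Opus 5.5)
The plan is the standard M-estimation argument: take a Taylor expansion of the first-order condition around $\boldsymbol{\theta}^*$. By Theorem~\ref{thm:dfd_consistency}, $\widehat{\boldsymbol{\theta}}_n\overset{p}{\to}\boldsymbol{\theta}^*$. Since $\boldsymbol{\theta}^*\in\operatorname{int}(\Theta)$, with probability tending to one $\widehat{\boldsymbol{\theta}}_n$ lies in the open neighborhood $\mathcal{N}$ of Assumption~\ref{ass:regularity}, which we may shrink to a convex ball contained in $\operatorname{int}(\Theta)$. On this event $\widehat{\boldsymbol{\theta}}_n$ is an interior minimizer of the twice continuously differentiable function $\widehat J_n$. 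It therefore satisfies
\[
\widehat{\mathbf{g}}_n(\widehat{\boldsymbol{\theta}}_n)
:=n^{-1}\sum_{i=1}^n\mathbf{g}(\mathbf{x}_i,\widehat{\boldsymbol{\theta}}_n)=\mathbf{0}_p .
\]
Events of vanishing probability do not affect convergence in distribution, so it suffices to work on this event.

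Next, I expand each coordinate of $\widehat{\mathbf{g}}_n$ by the mean value theorem along the segment from $\boldsymbol{\theta}^*$ to $\widehat{\boldsymbol{\theta}}_n$. This gives
\[
\mathbf{0}_p=\widehat{\mathbf{g}}_n(\boldsymbol{\theta}^*)+\widehat{\boldsymbol{\Gamma}}_n(\widehat{\boldsymbol{\theta}}_n-\boldsymbol{\theta}^*).
\]
Here row $k$ of $\widehat{\boldsymbol{\Gamma}}_n$ is $n^{-1}\sum_i\partial g_k(\mathbf{x}_i,\bar{\boldsymbol{\theta}}^{(k)})/\partial\boldsymbol{\theta}^\top$, with $\bar{\boldsymbol{\theta}}^{(k)}$ on that segment and hence $\bar{\boldsymbol{\theta}}^{(k)}\overset{p}{\to}\boldsymbol{\theta}^*$.

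For the score term, note that $\mathbb{E}_{p_{\mathbf{x}}}[\mathbf{g}(\mathbf{x},\boldsymbol{\theta}^*)]=\mathbf{0}_p$. This follows by differentiating $J$ under the expectation, which is justified by dominated convergence with the envelope $\sup_{\mathcal{N}}\|\mathbf{g}\|$ (integrable since $\alpha>2$), together with interiority of the minimizer $\boldsymbol{\theta}^*$. The same moment condition gives $\mathbb{E}\|\mathbf{g}(\mathbf{x},\boldsymbol{\theta}^*)\|^2<\infty$. The multivariate CLT then yields
\[
\sqrt{n}\,\widehat{\mathbf{g}}_n(\boldsymbol{\theta}^*)\overset{d}{\to}N(\mathbf{0}_p,\boldsymbol{\Delta}).
\]

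The main obstacle is the Hessian term: I need $\widehat{\boldsymbol{\Gamma}}_n\overset{p}{\to}\boldsymbol{\Gamma}$ when it is evaluated at random intermediate points. I would prove a uniform law of large numbers for $\nabla_{\boldsymbol{\theta}}\mathbf{g}$ over a compact ball $\bar B\subset\mathcal{N}$ around $\boldsymbol{\theta}^*$. The argument copies the $\delta$-net and modulus-of-continuity proof of Theorem~\ref{thm:dfd_consistency}: use continuity of $\nabla_{\boldsymbol{\theta}}\mathbf{g}(\mathbf{x},\cdot)$ from the $C^2$ assumption, and the integrable envelope $\sup_{\mathcal{N}}\|\nabla_{\boldsymbol{\theta}}\mathbf{g}\|_F$ from Assumption~\ref{ass:regularity}. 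This gives $\sup_{\bar B}\|n^{-1}\sum_i\nabla_{\boldsymbol{\theta}}\mathbf{g}(\mathbf{x}_i,\boldsymbol{\theta})-\mathbf{G}(\boldsymbol{\theta})\|_F\to0$ almost surely, where $\mathbf{G}(\boldsymbol{\theta})=\mathbb{E}[\nabla_{\boldsymbol{\theta}}\mathbf{g}(\mathbf{x},\boldsymbol{\theta})]$ is continuous by dominated convergence. Combining this with $\bar{\boldsymbol{\theta}}^{(k)}\overset{p}{\to}\boldsymbol{\theta}^*$ gives $\widehat{\boldsymbol{\Gamma}}_n\overset{p}{\to}\mathbf{G}(\boldsymbol{\theta}^*)=\boldsymbol{\Gamma}$.

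By Assumption~\ref{ass:nonsingularity}, $\boldsymbol{\Gamma}$ is invertible, so $\widehat{\boldsymbol{\Gamma}}_n$ is invertible with probability tending to one and
\[
\sqrt{n}(\widehat{\boldsymbol{\theta}}_n-\boldsymbol{\theta}^*)=-\widehat{\boldsymbol{\Gamma}}_n^{-1}\sqrt{n}\,\widehat{\mathbf{g}}_n(\boldsymbol{\theta}^*).
\]
Slutsky's theorem then gives the limit $N(\mathbf{0}_p,\boldsymbol{\Gamma}^{-1}\boldsymbol{\Delta}\boldsymbol{\Gamma}^{-\top})$. Since $\boldsymbol{\Gamma}$ is the Hessian of $J$ at $\boldsymbol{\theta}^*$, it is symmetric, and the $\boldsymbol{\Gamma}^{-\top}$ form of the covariance is the one stated in the theorem.
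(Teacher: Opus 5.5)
Your proposal is correct and follows essentially the same argument as the paper: localize via consistency, use the empirical first-order condition, expand around $\boldsymbol{\theta}^*$, prove a uniform law of large numbers for $\nabla_{\boldsymbol{\theta}}\mathbf{g}$ on a ball inside $\mathcal{N}$, and finish with the CLT and Slutsky's theorem. The differences are cosmetic: you use a row-wise mean value theorem where the paper uses the integral form of Taylor's theorem, which avoids selecting intermediate points, and your justification of $\mathbb{E}_{p_{\mathbf{x}}}[\mathbf{g}(\mathbf{x},\boldsymbol{\theta}^*)]=\mathbf{0}_p$ via the integrable envelope $\sup_{\mathcal{N}}\|\mathbf{g}\|$ is, if anything, more precise than the paper's appeal to the derivative envelope.
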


\begin{proof}
Let $G(\mathbf{x},\boldsymbol{\theta}) =\partial\mathbf{g}(\mathbf{x},\boldsymbol{\theta})/\partial\boldsymbol{\theta}^{\top}$. By Theorem~\ref{thm:dfd_consistency}, $\widehat{\boldsymbol{\theta}}_n \overset{p}{\longrightarrow}\boldsymbol{\theta}^*$. Since $\boldsymbol{\theta}^*\in\operatorname{int}(\Theta)$ and $\mathcal{N}$ in Assumption~\ref{ass:regularity} is an open neighborhood of $\boldsymbol{\theta}^*$, there exists a closed ball $\mathcal{B}$ centered at $\boldsymbol{\theta}^*$ such that $\mathcal{B}\subset\operatorname{int}(\Theta)\cap\mathcal{N}$. Consistency therefore implies $\mathbb{P}(\widehat{\boldsymbol{\theta}}_n\in\mathcal{B})\to1$. All subsequent first-order and Taylor expansions are understood on this event, whose probability tends to one. The derivative envelope in Assumption~\ref{ass:regularity} permits differentiation under the expectation. Since $\boldsymbol{\theta}^*$ is an interior minimizer of $J(\boldsymbol{\theta})$, its population first-order condition is
\[
\mathbb{E}_{p_{\mathbf{x}}}[\mathbf{g}(\mathbf{x},\boldsymbol{\theta}^*)]
=
\mathbf{0}_p.
\]
Likewise, because $\widehat{\boldsymbol{\theta}}_n\in\operatorname{int}(\Theta)$ on the event above and minimizes the empirical DFD criterion,
\[
P_n\mathbf{g}
(\cdot,\widehat{\boldsymbol{\theta}}_n)
=
\mathbf{0}_p,
\]
where $P_n f=n^{-1}\sum_{i=1}^n f(\mathbf{x}_i)$. For $t\in[0,1]$, define $\boldsymbol{\theta}_{n,t} =\boldsymbol{\theta}^* +t(\widehat{\boldsymbol{\theta}}_n-\boldsymbol{\theta}^*)$. Since $\mathcal{B}$ is convex, $\boldsymbol{\theta}_{n,t}\in\mathcal{B}$ whenever $\widehat{\boldsymbol{\theta}}_n\in\mathcal{B}$. The integral form of Taylor's theorem gives
\[
\mathbf{0}_p =
P_n\mathbf{g}(\cdot,\boldsymbol{\theta}^*)
+
\widehat{\boldsymbol{\Gamma}}_n
(\widehat{\boldsymbol{\theta}}_n-\boldsymbol{\theta}^*),
\]
where
\[
\widehat{\boldsymbol{\Gamma}}_n
=
\int_0^1
P_nG(\cdot,\boldsymbol{\theta}_{n,t})\,dt.
\]
We next show that $\widehat{\boldsymbol{\Gamma}}_n \overset{p}{\longrightarrow}\boldsymbol{\Gamma}$. On the event $\widehat{\boldsymbol{\theta}}_n\in\mathcal{B}$,
\[
\begin{aligned}
\|
\widehat{\boldsymbol{\Gamma}}_n-\boldsymbol{\Gamma}
\|_F
\leq {}&
\sup_{\boldsymbol{\theta}\in\mathcal{B}}
\|
(P_n-P)G(\cdot,\boldsymbol{\theta})
\|_F \\
&+
\sup_{t\in[0,1]}
\|
PG(\cdot,\boldsymbol{\theta}_{n,t})
-PG(\cdot,\boldsymbol{\theta}^*)
\|_F,
\end{aligned}
\]
where $Pf=\mathbb{E}_{p_{\mathbf{x}}}[f(\mathbf{x})]$. The first term converges to zero in probability by the uniform law of large numbers, using the continuity and integrable derivative envelope in Assumption~\ref{ass:regularity}. For the second term,
\[
\sup_{t\in[0,1]}
\|
\boldsymbol{\theta}_{n,t}-\boldsymbol{\theta}^*
\|
\leq
\|
\widehat{\boldsymbol{\theta}}_n-\boldsymbol{\theta}^*
\|
\overset{p}{\longrightarrow}0.
\]
Continuity of $G(\mathbf{x},\boldsymbol{\theta})$ and the same integrable envelope imply, by dominated convergence, that $PG(\cdot,\boldsymbol{\theta})$ is continuous at $\boldsymbol{\theta}^*$. Hence the second term also converges to zero in probability, and therefore
\[
\widehat{\boldsymbol{\Gamma}}_n
\overset{p}{\longrightarrow}
PG(\cdot,\boldsymbol{\theta}^*)
=
\boldsymbol{\Gamma}.
\]
By Assumption~\ref{ass:nonsingularity}, $\boldsymbol{\Gamma}$ has full rank $p$ and is therefore nonsingular. It follows that $\widehat{\boldsymbol{\Gamma}}_n$ is nonsingular with probability tending to one and $\widehat{\boldsymbol{\Gamma}}_n^{-1}\overset{p}{\longrightarrow}\boldsymbol{\Gamma}^{-1}$. Rearranging the Taylor expansion yields
\[
\sqrt{n}
(\widehat{\boldsymbol{\theta}}_n-\boldsymbol{\theta}^*)
=
-\widehat{\boldsymbol{\Gamma}}_n^{-1}
\frac{1}{\sqrt{n}}
\sum_{i=1}^n
\mathbf{g}(\mathbf{x}_i,\boldsymbol{\theta}^*).
\]
Equivalently,
\[
\sqrt{n}
(\widehat{\boldsymbol{\theta}}_n-\boldsymbol{\theta}^*)
=
-\boldsymbol{\Gamma}^{-1}
\frac{1}{\sqrt{n}}
\sum_{i=1}^n
\mathbf{g}(\mathbf{x}_i,\boldsymbol{\theta}^*)
+o_p(1).
\]
By Assumption~\ref{ass:regularity}, $\mathbf{g}(\mathbf{x},\boldsymbol{\theta}^*)$ has finite second moments, and its expectation is $\mathbf{0}_p$. The multivariate central limit theorem therefore gives
\[
\frac{1}{\sqrt{n}}
\sum_{i=1}^n
\mathbf{g}(\mathbf{x}_i,\boldsymbol{\theta}^*)
\overset{d}{\longrightarrow}
N(\mathbf{0}_p,\boldsymbol{\Delta}).
\]
Applying Slutsky's theorem to the asymptotic linear representation gives
\[
\sqrt{n}
(\widehat{\boldsymbol{\theta}}_n-\boldsymbol{\theta}^*)
\overset{d}{\longrightarrow}
N\left(
\mathbf{0}_p,
\boldsymbol{\Gamma}^{-1}
\boldsymbol{\Delta}
\boldsymbol{\Gamma}^{-\top}
\right),
\]
which proves the result.
\end{proof}

\subsection{Proof of the Bernstein--von Mises Theorem}

\label{app:bvm}

The proof follows the Bernstein--von Mises argument for Bayesian ETEL in \citet{chib2018bayesian}, adapted to the DFD estimating function $\mathbf{g}(\mathbf{x},\boldsymbol{\theta})
=\nabla_{\boldsymbol{\theta}}\varphi(\mathbf{x},\boldsymbol{\theta})$. Although the working probability model may be misspecified, $\boldsymbol{\theta}^*$ is an interior minimizer of the population DFD criterion and therefore satisfies
\[
\mathbb{E}_{p_{\mathbf{x}}}
[\mathbf{g}(\mathbf{x},\boldsymbol{\theta}^*)]
=
\mathbf{0}_p.
\]
Thus, the moment restriction defining the ETEL posterior is correctly specified at $\boldsymbol{\theta}^*$. Define
\[
\boldsymbol{V}
=
\boldsymbol{\Gamma}^{\top}
\boldsymbol{\Delta}^{-1}
\boldsymbol{\Gamma},
\qquad
\mathbf{h}
=
\sqrt{n}
(\boldsymbol{\theta}-\widehat{\boldsymbol{\theta}}_n),
\]
and
\[
R_n(\mathbf{h})
=
\frac{
L(\widehat{\boldsymbol{\theta}}_n+\mathbf{h}/\sqrt{n})
}{
L(\widehat{\boldsymbol{\theta}}_n)
}.
\]
We set $R_n(\mathbf{h})=0$ whenever $\widehat{\boldsymbol{\theta}}_n+\mathbf{h}/\sqrt{n}\notin\Theta$. By Theorem~\ref{thm:dfd_consistency}, $\widehat{\boldsymbol{\theta}}_n \overset{p}{\to}\boldsymbol{\theta}^*$. Since $\boldsymbol{\theta}^*\in\operatorname{int}(\Theta)$, with probability tending to one $\widehat{\boldsymbol{\theta}}_n$ is an interior minimizer of the empirical DFD criterion and hence
\[
P_n\mathbf{g}
(\cdot,\widehat{\boldsymbol{\theta}}_n)
=
\mathbf{0}_p.
\]
Local uniqueness of the ETEL multiplier then gives $\widehat{\boldsymbol{\lambda}}(\widehat{\boldsymbol{\theta}}_n)=\mathbf{0}_p$. Consequently, the ETEL weights at $\widehat{\boldsymbol{\theta}}_n$ are $1/n$ and $L(\widehat{\boldsymbol{\theta}}_n)=n^{-n}$.

\begin{lemma}
\label{lem:dfd_betel_local}

Under Assumptions~\ref{ass:identification}--\ref{ass:posterior}, for every fixed $K<\infty$,
\[
\sup_{\|\mathbf{h}\|\leq K}
\left|
\log R_n(\mathbf{h})
+
\frac12
\mathbf{h}^{\top}
\boldsymbol{V}\mathbf{h}
\right|
\overset{p}{\longrightarrow}0.
\]
\end{lemma}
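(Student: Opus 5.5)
We will expand the log-ETEL function around $\widehat{\boldsymbol{\theta}}_n$ and use the fact that $\widehat{\boldsymbol{\lambda}}(\widehat{\boldsymbol{\theta}}_n)=\mathbf{0}_p$ and $L(\widehat{\boldsymbol{\theta}}_n)=n^{-n}$. Write $\boldsymbol{\theta}_h=\widehat{\boldsymbol{\theta}}_n+\mathbf{h}/\sqrt{n}$. For $\|\mathbf{h}\|\le K$ and $n$ large, $\boldsymbol{\theta}_h\in\mathcal{B}\cap\mathcal{N}_0$ with probability tending to one, by Theorem~\ref{thm:dfd_consistency}. On this set, Assumption~\ref{ass:posterior} gives $L(\boldsymbol{\theta}_h)>0$ and a unique $\widehat{\boldsymbol{\lambda}}(\boldsymbol{\theta}_h)$. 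Define $M_n(\boldsymbol{\lambda},\boldsymbol{\theta})=n^{-1}\sum_i\exp(\boldsymbol{\lambda}^\top\mathbf{g}(\mathbf{x}_i,\boldsymbol{\theta}))$. Then
\[
\log R_n(\mathbf{h})=n\,\widehat{\boldsymbol{\lambda}}(\boldsymbol{\theta}_h)^\top \bar{\mathbf{g}}_n(\boldsymbol{\theta}_h)-n\log M_n(\widehat{\boldsymbol{\lambda}}(\boldsymbol{\theta}_h),\boldsymbol{\theta}_h),
\]
where $\bar{\mathbf{g}}_n=P_n\mathbf{g}$. The plan is to show three things uniformly over $\|\mathbf{h}\|\le K$: (i) $\sqrt{n}\,\bar{\mathbf{g}}_n(\boldsymbol{\theta}_h)=\boldsymbol{\Gamma}\mathbf{h}+o_p(1)$; (ii) $\sqrt{n}\,\widehat{\boldsymbol{\lambda}}(\boldsymbol{\theta}_h)=-\boldsymbol{\Delta}^{-1}\sqrt{n}\,\bar{\mathbf{g}}_n(\boldsymbol{\theta}_h)+o_p(1)$; and (iii) a quadratic expansion of the right-hand side above.

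For (i), we use the same integral Taylor argument as in the proof of Theorem~\ref{thm:dfd_asymptotic_normality}. Since $\bar{\mathbf{g}}_n(\widehat{\boldsymbol{\theta}}_n)=\mathbf{0}_p$,
\[
\bar{\mathbf{g}}_n(\boldsymbol{\theta}_h)=\Big[\int_0^1P_nG(\cdot,\widehat{\boldsymbol{\theta}}_n+t\mathbf{h}/\sqrt n)\,dt\Big]\mathbf{h}/\sqrt n .
\]
The uniform law of large numbers for $G$ on $\mathcal{B}$, together with continuity of $PG$, makes the bracket $\boldsymbol{\Gamma}+o_p(1)$ uniformly in $\|\mathbf{h}\|\le K$. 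In particular $\sup_{\|\mathbf{h}\|\le K}\|\bar{\mathbf{g}}_n(\boldsymbol{\theta}_h)\|=O_p(n^{-1/2})$.

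Step (ii) is the main obstacle, since it requires controlling the multiplier uniformly. We first show $\sup_{\|\mathbf{h}\|\le K}\|\widehat{\boldsymbol{\lambda}}(\boldsymbol{\theta}_h)\|=O_p(n^{-1/2})$ by a standard empirical-likelihood argument. From the envelope with $\alpha>2$ in Assumption~\ref{ass:regularity}, we get $\max_i\sup_{\boldsymbol{\theta}\in\mathcal{N}}\|\mathbf{g}(\mathbf{x}_i,\boldsymbol{\theta})\|=o_p(n^{1/\alpha})$, hence $o_p(n^{1/2})$. On the ball $\Lambda_n=\{\|\boldsymbol{\lambda}\|\le c_n n^{-1/2}\}$ with $c_n\to\infty$ slowly, this gives $\max_i|\boldsymbol{\lambda}^\top\mathbf{g}_i|=o_p(1)$. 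The uniform law of large numbers for $\mathbf{g}\mathbf{g}^\top$ near $\boldsymbol{\theta}^*$ gives $P_n\mathbf{g}\mathbf{g}^\top(\cdot,\boldsymbol{\theta}_h)=\boldsymbol{\Delta}+o_p(1)$, which is nonsingular by Assumption~\ref{ass:nonsingularity}. A second-order expansion then shows
\[
M_n(\boldsymbol{\lambda},\boldsymbol{\theta}_h)=1+\boldsymbol{\lambda}^\top\bar{\mathbf{g}}_n+\tfrac12\boldsymbol{\lambda}^\top(\boldsymbol{\Delta}+o_p(1))\boldsymbol{\lambda}
\]
on $\Lambda_n$. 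By strict convexity, the minimizer of $M_n$ lies in the interior of $\Lambda_n$, and by uniqueness it equals $\widehat{\boldsymbol{\lambda}}(\boldsymbol{\theta}_h)$. The first-order condition $P_n\mathbf{g}\,e^{\boldsymbol{\lambda}^\top\mathbf{g}}=\mathbf{0}_p$, expanded once more, gives $\bar{\mathbf{g}}_n+(\boldsymbol{\Delta}+o_p(1))\widehat{\boldsymbol{\lambda}}=o_p(n^{-1/2})$, and (ii) follows. If the convexity localization is delicate, the fallback is the argument of \citet{newey2004higher}: evaluate $M_n$ at the boundary of $\Lambda_n$ and compare it with its value at $\mathbf{0}$.

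For (iii), write $\widehat{\boldsymbol{\lambda}}$ for $\widehat{\boldsymbol{\lambda}}(\boldsymbol{\theta}_h)$ and $\mathbf{b}=\sqrt{n}\,\bar{\mathbf{g}}_n(\boldsymbol{\theta}_h)$. Substituting the expansion of $M_n$ and using $\log(1+u)=u+O(u^2)$ with $u=O_p(n^{-1})$ gives
\[
\log R_n=n\widehat{\boldsymbol{\lambda}}^\top\bar{\mathbf{g}}_n-n\widehat{\boldsymbol{\lambda}}^\top\bar{\mathbf{g}}_n-\tfrac n2\widehat{\boldsymbol{\lambda}}^\top\boldsymbol{\Delta}\widehat{\boldsymbol{\lambda}}+o_p(1)=-\tfrac12\mathbf{b}^\top\boldsymbol{\Delta}^{-1}\mathbf{b}+o_p(1)
\]
by (ii). By (i), this equals $-\tfrac12\mathbf{h}^\top\boldsymbol{\Gamma}^\top\boldsymbol{\Delta}^{-1}\boldsymbol{\Gamma}\mathbf{h}+o_p(1)=-\tfrac12\mathbf{h}^\top\boldsymbol{V}\mathbf{h}+o_p(1)$. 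All $o_p(1)$ terms are uniform over $\|\mathbf{h}\|\le K$ because each comes from a supremum over $\mathcal{B}$ and $\mathbf{h}$ is bounded, which proves the lemma.
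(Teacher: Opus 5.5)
Your proof is correct, but it takes a different route from the paper's. The paper's proof differentiates the multiplier equation at $\widehat{\boldsymbol{\theta}}_n$, where $\widehat{\boldsymbol{\lambda}}=\mathbf{0}_p$, to get $\partial\widehat{\boldsymbol{\lambda}}(\widehat{\boldsymbol{\theta}}_n)/\partial\boldsymbol{\theta}^\top=-\widehat{\boldsymbol{\Delta}}_n^{-1}\widehat{\boldsymbol{\Gamma}}_n\to-\boldsymbol{\Delta}^{-1}\boldsymbol{\Gamma}$. It then invokes the local ETEL expansion of \citet{chib2018bayesian} to obtain the quadratic approximation uniformly over $\|\mathbf{h}\|\le K$.

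You instead prove that expansion directly. Step (i) reuses the integral Taylor device from Theorem~\ref{thm:dfd_asymptotic_normality}. Step (ii) is the Owen/Newey--Smith localization of the multiplier on a shrinking ball. Step (iii) expands $n\log M_n$ explicitly. Your linearization $\sqrt n\,\widehat{\boldsymbol{\lambda}}(\boldsymbol{\theta}_h)=-\boldsymbol{\Delta}^{-1}\boldsymbol{\Gamma}\mathbf{h}+o_p(1)$ is the integrated counterpart of the paper's derivative formula.

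\textbf{What each route buys.} The paper's route is short and leans on published Bayesian ETEL theory. Yours is self-contained and makes explicit where each assumption enters. The $\alpha>2$ envelope yields $\max_i\sup_{\mathcal N}\|\mathbf{g}(\mathbf{x}_i,\cdot)\|=o_p(n^{1/2})$ and the ULLN for $\mathbf{g}\mathbf{g}^\top$. Nonsingularity of $\boldsymbol{\Delta}$ gives the quadratic lower bound. Uniqueness in Assumption~\ref{ass:posterior} identifies the ball minimizer with $\widehat{\boldsymbol{\lambda}}(\boldsymbol{\theta}_h)$. Your argument also delivers the uniform control of $\widehat{\boldsymbol{\lambda}}(\cdot)$ over the whole $K/\sqrt n$ ball. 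A derivative at the single point $\widehat{\boldsymbol{\theta}}_n$ does not provide this by itself, and your version needs neither differentiability of $\widehat{\boldsymbol{\lambda}}(\cdot)$ nor an implicit function theorem.

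\textbf{One wording point.} The fact that the minimizer $\tilde{\boldsymbol{\lambda}}$ of $M_n$ over $\Lambda_n$ is interior does not come from convexity. It comes from comparing with $M_n(\mathbf{0}_p,\boldsymbol{\theta}_h)=1$, which forces $\|\tilde{\boldsymbol{\lambda}}\|\le C\|\bar{\mathbf{g}}_n(\boldsymbol{\theta}_h)\|=O_p(n^{-1/2})$, and this is strictly less than $c_n n^{-1/2}$ with probability tending to one. Convexity is what then makes this interior local minimizer global. So your stated ``fallback'' is really the operative argument, and it goes through as written.
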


\begin{proof}
Fix $K<\infty$. Uniformly over $\|\mathbf{h}\|\leq K$,
\[
\widehat{\boldsymbol{\theta}}_n
+\frac{\mathbf{h}}{\sqrt n}
=
\boldsymbol{\theta}^*+o_p(1).
\]
Hence, with probability tending to one, these local parameter values lie in a neighborhood of $\boldsymbol{\theta}^*$ on which the ETEL multiplier is well-defined and unique. At $\widehat{\boldsymbol{\theta}}_n$, the empirical moment condition is exactly satisfied, so
$\widehat{\boldsymbol{\lambda}}(\widehat{\boldsymbol{\theta}}_n)=\mathbf{0}_p$.
Let
\[
G(\mathbf{x},\boldsymbol{\theta})
=
\frac{\partial
\mathbf{g}(\mathbf{x},\boldsymbol{\theta})}
{\partial\boldsymbol{\theta}^{\top}}.
\]
Differentiating the ETEL multiplier equation at $\widehat{\boldsymbol{\theta}}_n$ gives
\[
\frac{
\partial\widehat{\boldsymbol{\lambda}}
(\widehat{\boldsymbol{\theta}}_n)
}{
\partial\boldsymbol{\theta}^{\top}
}
=
-
\widehat{\boldsymbol{\Delta}}_n^{-1}
\widehat{\boldsymbol{\Gamma}}_n,
\]
where
\[
\widehat{\boldsymbol{\Delta}}_n
=
P_n\left[
\mathbf{g}(\cdot,\widehat{\boldsymbol{\theta}}_n)
\mathbf{g}(\cdot,\widehat{\boldsymbol{\theta}}_n)^\top
\right],
\qquad
\widehat{\boldsymbol{\Gamma}}_n
=
P_nG(\cdot,\widehat{\boldsymbol{\theta}}_n).
\]
By Theorem~\ref{thm:dfd_consistency}, the uniform laws of large numbers implied by Assumption~\ref{ass:regularity}, and Assumption~\ref{ass:nonsingularity},
\[
\widehat{\boldsymbol{\Delta}}_n
\overset{p}{\longrightarrow}
\boldsymbol{\Delta},
\qquad
\widehat{\boldsymbol{\Gamma}}_n
\overset{p}{\longrightarrow}
\boldsymbol{\Gamma}.
\]
Therefore,
\[
\frac{
\partial\widehat{\boldsymbol{\lambda}}
(\widehat{\boldsymbol{\theta}}_n)
}{
\partial\boldsymbol{\theta}^{\top}
}
\overset{p}{\longrightarrow}
-\boldsymbol{\Delta}^{-1}\boldsymbol{\Gamma}.
\]
The local ETEL expansion of \citet{chib2018bayesian}, applied to the DFD estimating function under Assumptions~\ref{ass:regularity}--\ref{ass:posterior}, then yields, uniformly over bounded $\mathbf{h}$,
\[
\ell_n\left(
\widehat{\boldsymbol{\theta}}_n
+\frac{\mathbf{h}}{\sqrt n}
\right)
-
\ell_n(\widehat{\boldsymbol{\theta}}_n)
=
-\frac12
\mathbf{h}^{\top}
\boldsymbol{\Gamma}^{\top}
\boldsymbol{\Delta}^{-1}
\boldsymbol{\Gamma}
\mathbf{h}
+
o_p(1).
\]
Since $\boldsymbol{V} =\boldsymbol{\Gamma}^{\top}\boldsymbol{\Delta}^{-1}\boldsymbol{\Gamma}$, the result follows.
\end{proof}

\begin{lemma}
\label{lem:dfd_betel_integrated}

Under Assumptions~\ref{ass:identification}--\ref{ass:posterior},
\[
\int_{\mathbb{R}^p}
\left|
\pi\left(
\widehat{\boldsymbol{\theta}}_n
+\frac{\mathbf{h}}{\sqrt n}
\right)
R_n(\mathbf{h})
-
\pi(\boldsymbol{\theta}^*)
\exp\left(
-\frac12
\mathbf{h}^{\top}
\boldsymbol{V}\mathbf{h}
\right)
\right|
d\mathbf{h}
\overset{p}{\longrightarrow}0,
\]
where the first term in the integrand is defined to be zero whenever $\widehat{\boldsymbol{\theta}}_n+\mathbf{h}/\sqrt n\notin\Theta$.
\end{lemma}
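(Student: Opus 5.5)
The plan is the standard three-region split used in Bernstein--von Mises proofs (as in \citet{chib2018bayesian}). Fix $\epsilon>0$ and choose $K$ large and $\delta>0$ small. We split $\mathbb{R}^p$ into three regions: the compact region $\{\|\mathbf{h}\|\le K\}$, the intermediate region $\{K<\|\mathbf{h}\|\le\delta\sqrt n\}$, and the outer region $\{\|\mathbf{h}\|>\delta\sqrt n\}$. On each region we bound the integral of the absolute difference. The Gaussian term $\pi(\boldsymbol{\theta}^*)\exp(-\frac12\mathbf{h}^\top\boldsymbol{V}\mathbf{h})$ is integrable because $\boldsymbol{V}=\boldsymbol{\Gamma}^\top\boldsymbol{\Delta}^{-1}\boldsymbol{\Gamma}$ is positive definite by Assumption~\ref{ass:nonsingularity}. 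Its mass outside $\|\mathbf{h}\|\le K$ is therefore below $\epsilon$ for $K$ large, so on the two outer regions it suffices to control $\int\pi(\widehat{\boldsymbol{\theta}}_n+\mathbf{h}/\sqrt n)R_n(\mathbf{h})\,d\mathbf{h}$.

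\textbf{Compact region.} Here Lemma~\ref{lem:dfd_betel_local} gives $\sup_{\|\mathbf{h}\|\le K}|R_n(\mathbf{h})-\exp(-\frac12\mathbf{h}^\top\boldsymbol{V}\mathbf{h})|\to0$ in probability, since the exponential map is uniformly continuous on bounded sets. Combined with $\widehat{\boldsymbol{\theta}}_n\to\boldsymbol{\theta}^*$ (Theorem~\ref{thm:dfd_consistency}) and continuity of $\pi$ at $\boldsymbol{\theta}^*$ (Assumption~\ref{ass:posterior}), the integrand converges to zero uniformly on this bounded set. The integral over the ball then tends to zero in probability.

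\textbf{Outer region.} Write $\boldsymbol{\theta}=\widehat{\boldsymbol{\theta}}_n+\mathbf{h}/\sqrt n$. With probability tending to one, $\|\widehat{\boldsymbol{\theta}}_n-\boldsymbol{\theta}^*\|<\delta/2$, so $\|\boldsymbol{\theta}-\boldsymbol{\theta}^*\|>\delta/2$. The separation condition in Assumption~\ref{ass:posterior}, applied at $\delta/2$ together with $\ell_n(\widehat{\boldsymbol{\theta}}_n)\ge\ell_n(\boldsymbol{\theta}^*)$, gives $R_n(\mathbf{h})\le e^{-n\epsilon'}$. Note that $\ell_n(\widehat{\boldsymbol{\theta}}_n)=-n\log n$ is the global maximum of $\ell_n$, since every feasible weight vector has $\prod\omega_i\le n^{-n}$. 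Changing variables back to $\boldsymbol{\theta}$, the integral is at most $n^{p/2}e^{-n\epsilon'}\int\pi=o(1)$, using compactness of $\Theta$ and properness of the prior.

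\textbf{Intermediate region (main obstacle).} Lemma~\ref{lem:dfd_betel_local} only covers bounded $\mathbf{h}$, so here we need a uniform quadratic upper bound of the form $\log R_n(\mathbf{h})\le -c\|\mathbf{h}\|^2$ for $K<\|\mathbf{h}\|\le\delta\sqrt n$, with some $c>0$, holding with probability tending to one. The integrand is then dominated by $\sup_{\mathcal{N}_0}\pi\cdot e^{-c\|\mathbf{h}\|^2}$, whose tail beyond $K$ is small. Continuity of $\pi$ gives boundedness near $\boldsymbol{\theta}^*$.

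To obtain this bound, we expand $\ell_n$ around $\widehat{\boldsymbol{\theta}}_n$ to second order on the ball $\|\boldsymbol{\theta}-\boldsymbol{\theta}^*\|\le2\delta$, which lies inside $\mathcal{N}\cap\mathcal{N}_0$. The ETEL multiplier $\widehat{\boldsymbol{\lambda}}(\boldsymbol{\theta})$ is implicitly defined and smooth there, and its Jacobian at $\widehat{\boldsymbol{\theta}}_n$ is $-\widehat{\boldsymbol{\Delta}}_n^{-1}\widehat{\boldsymbol{\Gamma}}_n$. Using the uniform laws of large numbers for $P_n\mathbf{g}$, $P_nG$ and $P_n\mathbf{g}\mathbf{g}^\top$, justified by the $\alpha>2$ envelope, we show that $-n^{-1}\nabla^2\ell_n(\boldsymbol{\theta})$ is uniformly within $o_p(1)+O(\delta)$ of $\boldsymbol{V}$ on the ball. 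The $\alpha>2$ moment also gives $\max_i\|\mathbf{g}(\mathbf{x}_i,\boldsymbol{\theta})\|=o_p(n^{1/2})$, which controls the exponential weights.

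Choosing $\delta$ small then makes this Hessian bounded below by $\frac12\boldsymbol{V}$. A Taylor expansion with zero gradient at $\widehat{\boldsymbol{\theta}}_n$ yields $\log R_n(\mathbf{h})\le-\frac14\mathbf{h}^\top\boldsymbol{V}\mathbf{h}$, which is the required bound. This uniform Hessian control of the implicitly defined multiplier is the step I expect to require the most care.

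Summing the three regions and letting $\epsilon\downarrow0$ proves the lemma.
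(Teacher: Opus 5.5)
Your three-region split coincides with the paper's proof, and so does your treatment of the compact region (Lemma~\ref{lem:dfd_betel_local} plus continuity of $\pi$) and of the outer region (separation at $\delta/2$, plus $\ell_n(\widehat{\boldsymbol{\theta}}_n)=-n\log n$ being the global maximum of $\ell_n$). Those parts are fine. Cutting at $\|\mathbf{h}\|=\delta\sqrt n$ rather than at $\|\widehat{\boldsymbol{\theta}}_n+\mathbf{h}/\sqrt n-\boldsymbol{\theta}^*\|=\delta$ is immaterial.

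The gap is the intermediate region, as you anticipated. The paper does not derive this bound; it imports it as the localization step of \citet{chib2018bayesian}. Your self-contained replacement is a uniform bound $-n^{-1}\nabla^2\ell_n(\boldsymbol{\theta})=\boldsymbol{V}+o_p(1)+O(\delta)$ on the \emph{fixed} ball $\|\boldsymbol{\theta}-\boldsymbol{\theta}^*\|\le 2\delta$. This does not follow from Assumptions~\ref{ass:identification}--\ref{ass:posterior}.

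A first, repairable, issue is smoothness. Write $\mathbf{g}_j=\mathbf{g}(\mathbf{x}_j,\boldsymbol{\theta})$ and $G_j=G(\mathbf{x}_j,\boldsymbol{\theta})$. Then $\ell_n=\sum_i\widehat{\boldsymbol{\lambda}}^\top\mathbf{g}_i-n\log\sum_j e^{\widehat{\boldsymbol{\lambda}}^\top\mathbf{g}_j}$. Using $\sum_j\omega_j\mathbf{g}_j=\mathbf{0}_p$, its gradient is $n(\partial\widehat{\boldsymbol{\lambda}}/\partial\boldsymbol{\theta}^\top)^\top P_n\mathbf{g}+\sum_j(1-n\omega_j)G_j^\top\widehat{\boldsymbol{\lambda}}$. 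Since $\partial\widehat{\boldsymbol{\lambda}}/\partial\boldsymbol{\theta}^\top$ already contains the $G_j$, the Hessian $\nabla^2\ell_n$ involves $\partial G/\partial\boldsymbol{\theta}$, i.e.\ third derivatives of $\varphi$. Assumption~\ref{ass:regularity} only makes $\varphi$ twice differentiable, so you would have to argue with $\nabla\ell_n$ and an integral-form mean value expansion instead.

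The real obstruction is control of the exponential weights. Away from $\|\mathbf{h}\|=O(1)$, $\widehat{\boldsymbol{\lambda}}(\boldsymbol{\theta})$ is no longer $O_p(n^{-1/2})$; its natural size is $\|\boldsymbol{\theta}-\widehat{\boldsymbol{\theta}}_n\|$, which ranges up to $\delta$. The $\alpha>2$ envelope gives $\max_i\sup_{\boldsymbol{\theta}\in\mathcal{N}}\|\mathbf{g}(\mathbf{x}_i,\boldsymbol{\theta})\|=o_p(n^{1/\alpha})$. Hence $\max_i|\widehat{\boldsymbol{\lambda}}^\top\mathbf{g}_i|=o_p(1)$ holds only on the shrinking set $\|\boldsymbol{\theta}-\widehat{\boldsymbol{\theta}}_n\|=O(n^{-1/\alpha})$, i.e.\ $\|\mathbf{h}\|\lesssim n^{1/2-1/\alpha}$. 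Beyond that, up to $\|\mathbf{h}\|=\delta\sqrt n$, the tilting factors $e^{\widehat{\boldsymbol{\lambda}}^\top\mathbf{g}_i}$ are unbounded and a few extreme observations can dominate the weights. The tilted averages $\sum_j\omega_j\mathbf{g}_j\mathbf{g}_j^\top$ and $\sum_j\omega_jG_j$, which determine $\partial\widehat{\boldsymbol{\lambda}}/\partial\boldsymbol{\theta}^\top$, then need not be within $O(\delta)$ of $P_n\mathbf{g}\mathbf{g}^\top$ and $P_nG$. Your uniform laws of large numbers cover only the untilted averages.

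Polynomial moments alone genuinely do not yield $\log R_n(\mathbf{h})\le -c\|\mathbf{h}\|^2$ in this range. In a heavy-tailed location model, up-weighting the most extreme observation lets ETEL move the weighted mean by $t$ at log-cost at most of order $nt\log n/\max_i\|\mathbf{g}_i\|$. That is $o(nt^2)$ for fixed $t$.

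Your route therefore needs an ingredient that is not among the stated assumptions. One option is a local exponential moment such as $\mathbb{E}_{p_{\mathbf{x}}}[\sup_{\boldsymbol{\theta}\in\mathcal{N}}e^{c\|\mathbf{g}(\mathbf{x},\boldsymbol{\theta})\|}(\|\mathbf{g}(\mathbf{x},\boldsymbol{\theta})\|^2+\|\nabla_{\boldsymbol{\theta}}\mathbf{g}(\mathbf{x},\boldsymbol{\theta})\|_F)]<\infty$. Another is a local quadratic separation condition $\ell_n(\boldsymbol{\theta})-\ell_n(\widehat{\boldsymbol{\theta}}_n)\le -cn\|\boldsymbol{\theta}-\widehat{\boldsymbol{\theta}}_n\|^2$ on $\{M/\sqrt n<\|\boldsymbol{\theta}-\boldsymbol{\theta}^*\|\le\delta\}$. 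Supplying this is exactly the work the paper delegates to \citet{chib2018bayesian}.
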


\begin{proof}
The argument follows the localization argument for Bayesian ETEL in \citet{chib2018bayesian}. Fix $\delta>0$ sufficiently small that the $\delta$-neighborhood of $\boldsymbol{\theta}^*$ lies in the region on which the local ETEL regularity conditions hold. For fixed $K<\infty$, partition $\mathbb{R}^p$ into
\[
A_{1n}
=
\{\|\mathbf{h}\|\leq K\},
\]
\[
A_{2n}
=
\left\{
\|\mathbf{h}\|>K,\,
\left\|
\widehat{\boldsymbol{\theta}}_n
+\frac{\mathbf{h}}{\sqrt n}
-\boldsymbol{\theta}^*
\right\|
\leq\delta
\right\},
\]
and
\[
A_{3n}
=
\left\{
\left\|
\widehat{\boldsymbol{\theta}}_n
+\frac{\mathbf{h}}{\sqrt n}
-\boldsymbol{\theta}^*
\right\|
>\delta
\right\}.
\]
On $A_{1n}$, Lemma~\ref{lem:dfd_betel_local} gives
\[
\sup_{\mathbf{h}\in A_{1n}}
\left|
\log R_n(\mathbf{h})
+
\frac12
\mathbf{h}^{\top}\boldsymbol{V}\mathbf{h}
\right|
\overset{p}{\longrightarrow}0.
\]
Moreover,
\[
\sup_{\mathbf{h}\in A_{1n}}
\left\|
\widehat{\boldsymbol{\theta}}_n
+\frac{\mathbf{h}}{\sqrt n}
-\boldsymbol{\theta}^*
\right\|
\overset{p}{\longrightarrow}0.
\]
Continuity of the prior density at $\boldsymbol{\theta}^*$ therefore implies
\[
\sup_{\mathbf{h}\in A_{1n}}
\left|
\pi\left(
\widehat{\boldsymbol{\theta}}_n
+\frac{\mathbf{h}}{\sqrt n}
\right)
-
\pi(\boldsymbol{\theta}^*)
\right|
\overset{p}{\longrightarrow}0.
\]
Since $A_{1n}$ has finite Lebesgue measure for fixed $K$, the contribution from $A_{1n}$ converges to zero in probability. For $A_{2n}$, the intermediate-region localization bound in the Bayesian ETEL argument of \citet{chib2018bayesian}, applied to the DFD moment function under Assumptions~\ref{ass:regularity}--\ref{ass:posterior}, gives, for every $\varepsilon>0$,
\[
\lim_{K\to\infty}
\limsup_{n\to\infty}
P\left\{
\int_{A_{2n}}
\pi\left(
\widehat{\boldsymbol{\theta}}_n
+\frac{\mathbf{h}}{\sqrt n}
\right)
R_n(\mathbf{h})
\,d\mathbf{h}
>\varepsilon
\right\}
=
0.
\]
Since $\boldsymbol{\Delta}$ is positive definite and $\boldsymbol{\Gamma}$ is nonsingular, $\boldsymbol{V}$ is positive definite. Hence
\[
\int_{\|\mathbf{h}\|>K}
\exp\left(
-\frac12
\mathbf{h}^{\top}\boldsymbol{V}\mathbf{h}
\right)
d\mathbf{h}
\longrightarrow0
\]
as $K\to\infty$. Thus the contribution from $A_{2n}$ vanishes after first letting $n\to\infty$ and then $K\to\infty$. Finally, consider $A_{3n}$. By the separation condition in Assumption~\ref{ass:posterior}, there exists $\epsilon>0$ such that, with probability tending to one,
\[
\sup_{\|\boldsymbol{\theta}-\boldsymbol{\theta}^*\|>\delta}
\{
\ell_n(\boldsymbol{\theta})
-
\ell_n(\boldsymbol{\theta}^*)
\}
\leq
-n\epsilon.
\]
At $\widehat{\boldsymbol{\theta}}_n$, the ETEL weights are $1/n$ and $L(\widehat{\boldsymbol{\theta}}_n)=n^{-n}$. By the arithmetic--geometric mean inequality, $L(\boldsymbol{\theta})\leq n^{-n}$ for every feasible $\boldsymbol{\theta}$, so
\[
\ell_n(\boldsymbol{\theta}^*)
\leq
\ell_n(\widehat{\boldsymbol{\theta}}_n).
\]
Consequently, uniformly over $A_{3n}$,
\[
R_n(\mathbf{h})
\leq
e^{-n\epsilon}
\]
with probability tending to one. Using the change of variables $\boldsymbol{\theta} =\widehat{\boldsymbol{\theta}}_n+\mathbf{h}/\sqrt n$,
\[
\begin{aligned}
&\int_{A_{3n}}
\pi\left(
\widehat{\boldsymbol{\theta}}_n
+\frac{\mathbf{h}}{\sqrt n}
\right)
R_n(\mathbf{h})
\,d\mathbf{h}
\\
&\qquad\leq
n^{p/2}e^{-n\epsilon}
\int_{\Theta}\pi(\boldsymbol{\theta})\,d\boldsymbol{\theta}
=
o_p(1).
\end{aligned}
\]
By consistency, with probability tending to one, $\|\widehat{\boldsymbol{\theta}}_n-\boldsymbol{\theta}^*\|
<\delta/2$. On this event, $\mathbf{h}\in A_{3n}$ implies $\|\mathbf{h}\|>\delta\sqrt n/2$, and therefore
\[
\int_{A_{3n}}
\exp\left(
-\frac12
\mathbf{h}^{\top}\boldsymbol{V}\mathbf{h}
\right)
d\mathbf{h}
\longrightarrow0.
\]
Combining the three regions, first letting $n\to\infty$ for fixed $K$ and then $K\to\infty$, proves the result.
\end{proof}

\begin{proof}[Proof of Theorem~\ref{thm:bvm}]

Define
\[
k_n(\mathbf{h})
=
\pi\left(
\widehat{\boldsymbol{\theta}}_n
+\frac{\mathbf{h}}{\sqrt n}
\right)
R_n(\mathbf{h}),
\qquad
C_n
=
\int_{\mathbb{R}^p}
k_n(\mathbf{h})\,d\mathbf{h},
\]
with $k_n(\mathbf{h})=0$ outside the transformed parameter space. The posterior density of $\mathbf{h} =\sqrt n(\boldsymbol{\theta} -\widehat{\boldsymbol{\theta}}_n)$ is
\[
q_n(\mathbf{h}\mid\mathbf{x}_{1:n})
=
C_n^{-1}k_n(\mathbf{h}).
\]
By Lemma~\ref{lem:dfd_betel_integrated},
\[
C_n
\overset{p}{\longrightarrow}
\pi(\boldsymbol{\theta}^*)
(2\pi)^{p/2}
|\boldsymbol{V}|^{-1/2}.
\]
The limiting constant is strictly positive because $\pi(\boldsymbol{\theta}^*)>0$ and $\boldsymbol{V}$ is positive definite. Let
\[
\phi_{\boldsymbol{V}}(\mathbf{h})
=
(2\pi)^{-p/2}
|\boldsymbol{V}|^{1/2}
\exp\left(
-\frac12
\mathbf{h}^{\top}\boldsymbol{V}\mathbf{h}
\right).
\]
Lemma~\ref{lem:dfd_betel_integrated}, together with the convergence of $C_n$, gives
\[
\int_{\mathbb{R}^p}
\left|
q_n(\mathbf{h}\mid\mathbf{x}_{1:n})
-
\phi_{\boldsymbol{V}}(\mathbf{h})
\right|
d\mathbf{h}
\overset{p}{\longrightarrow}0.
\]
Thus the total variation distance between the posterior distribution of $\sqrt n(\boldsymbol{\theta}-\widehat{\boldsymbol{\theta}}_n)$ and $N(\mathbf{0}_p,\boldsymbol{V}^{-1})$ converges to zero in probability. Since
\[
\boldsymbol{V}^{-1}
=
\left(
\boldsymbol{\Gamma}^{\top}
\boldsymbol{\Delta}^{-1}
\boldsymbol{\Gamma}
\right)^{-1},
\]
the conclusion of Theorem~\ref{thm:bvm} follows.
\end{proof}

\section{Posterior Computation}
\label{app:algorithm}

Algorithm~\ref{alg:mh_dfd_betel} summarizes the Metropolis--Hastings sampler used for the DFD-BETEL posterior defined in equation \eqref{eq:dfd_betel_posterior}. Given a proposal density
$k(\widetilde{\boldsymbol{\theta}}\mid\boldsymbol{\theta})$, the
acceptance probability is
\[
\alpha(\boldsymbol{\theta},\widetilde{\boldsymbol{\theta}})
=
\min\left\{
1,
\frac{
L(\widetilde{\boldsymbol{\theta}})
\pi(\widetilde{\boldsymbol{\theta}})
k(\boldsymbol{\theta}\mid\widetilde{\boldsymbol{\theta}})
}{
L(\boldsymbol{\theta})
\pi(\boldsymbol{\theta})
k(\widetilde{\boldsymbol{\theta}}\mid\boldsymbol{\theta})
}
\right\}.
\]

\begin{algorithm}[htbp]
\caption{Metropolis--Hastings sampler for DFD-BETEL}
\label{alg:mh_dfd_betel}
\begin{algorithmic}[1]
\STATE \textbf{Input:} observations $\mathbf{x}_{1:n}$, iterations $S$,
proposal density $k$, and feasible initial state
$\boldsymbol{\theta}^0$.
\FOR{$s=0,\ldots,S-1$}
    \STATE Propose
    $\widetilde{\boldsymbol{\theta}}
    \sim k(\cdot\mid\boldsymbol{\theta}^s)$.
    \IF{$\pi(\widetilde{\boldsymbol{\theta}})=0$ or
    $\mathbf{0}\notin\operatorname{int}
    \mathcal{C}(\widetilde{\boldsymbol{\theta}})$}
        \STATE Set
        $\boldsymbol{\theta}^{s+1}=\boldsymbol{\theta}^s$.
    \ELSE
        \STATE Compute $L(\widetilde{\boldsymbol{\theta}})$.
        \STATE Generate $u \sim \text{Unif}(0,1)$.
        \STATE Set
        $\boldsymbol{\theta}^{s+1}
        =\widetilde{\boldsymbol{\theta}}$
        if
        $u < \alpha(\boldsymbol{\theta}^s,
        \widetilde{\boldsymbol{\theta}})$;
        otherwise set
        $\boldsymbol{\theta}^{s+1}=\boldsymbol{\theta}^s$.
    \ENDIF
\ENDFOR
\STATE \textbf{Output:} posterior draws
$\{\boldsymbol{\theta}^s\}_{s=1}^S$.
\end{algorithmic}
\end{algorithm}

\section{Computational Details}
\label{app:computation}

This section summarizes the main computational settings used in the simulation studies. Model-specific data-generating mechanisms, priors, and pseudo-true targets are provided with the corresponding experiments.

\paragraph{Posterior sampling.}
The correctly specified CMP, Ising, and Poisson graphical-model experiments use four chains of 25,000 iterations with 5,000 warm-up iterations. The Potts experiments use four chains of 15,000 iterations with 3,000 warm-up iterations.

\paragraph{Auxiliary simulation and likelihood approximation.}
For the Poisson graphical model, DMH generates auxiliary datasets using 30 Gibbs sweeps initialized at the observed data. For the Potts model, DMH uses 2,000 random-site heat-bath updates per auxiliary lattice, initialized at the observed lattice. For the CMP experiment, the Exchange sampler generates independent auxiliary observations from the CMP distribution truncated to $\{0,\ldots,200\}$. For standard Bayes, the CMP normalizing series is truncated at $K=\max\{99,\max_i x_i\}$.

\paragraph{Computing infrastructure.}
All simulation experiments were run on the computing cluster using Slurm, with R~4.2.2, Rcpp/C++, and Python~3 for high-precision numerical calculations.

\section{Additional Results under Correct Specification}
\label{app:correct_specification}

This section provides additional numerical results under correct model specification. We first consider the Conway--Maxwell--Poisson (CMP) and Ising models, which complement the Poisson graphical model studied in Section~\ref{sec:pg2}. We then provide additional implementation details for the Poisson graphical model.

\subsection{Conway--Maxwell--Poisson Model}
\label{app:cmp_correct}

As a simulation counterpart to the real-data analysis in Section~\ref{sec:real_data}, we consider the CMP model \citep{shmueli2005useful} defined in \eqref{eq:cmp_model}. We examine two correctly specified settings: $(\lambda_0,\nu_0)=(4,1.25)$ (underdispersion) and $(\lambda_0,\nu_0)=(4,0.75)$ (overdispersion), referred to as Cases I and II, respectively. For each setting, we generate $n=2000$ independent observations and assign independent $\chi_3^2$ priors to $\lambda$ and $\nu$. For the likelihood-based comparison, we use exchange sampling \citep{murray2006mcmc}, with auxiliary observations generated independently from the CMP distribution truncated to $\{0,\ldots,200\}$. Since the normalizing series can also be accurately approximated in this example, standard Bayesian inference is included as an additional likelihood-based benchmark. Table~\ref{tab:cmp_correct} summarizes the estimation results. DFD-BETEL and DFD-Bayes exhibit similar estimation accuracy in both settings, particularly under overdispersion. Exchange and standard Bayes generally attain smaller MAD and MSE, with the efficiency difference more pronounced in Case I. These results are consistent with the Poisson graphical model experiment in Section~\ref{sec:pg2}, where likelihood-based methods are more efficient under correct specification.

\begin{table*}[htbp]
\caption{Correctly specified CMP model: bias, MAD, and MSE of posterior
mean estimates over 500 replications.}
\label{tab:cmp_correct}
\centering
\small
\setlength{\tabcolsep}{5pt}
\begin{tabular}{llrrrrrr}
\toprule
& & \multicolumn{3}{c}{$\lambda$}
  & \multicolumn{3}{c}{$\nu$} \\
\cmidrule(lr){3-5}\cmidrule(lr){6-8}
Case & Method
& Bias & MAD & MSE
& Bias & MAD & MSE \\
\midrule
I & DFD-BETEL
& 0.1322 & 0.3061 & 0.1632
& 0.0172 & 0.0537 & 0.0047 \\
  & DFD-Bayes
& 0.0999 & 0.2919 & 0.1472
& 0.0108 & 0.0518 & 0.0044 \\
  & Exchange
& 0.0371 & 0.1879 & 0.0560
& 0.0043 & 0.0360 & 0.0020 \\
  & Standard Bayes
& 0.0368 & 0.1880 & 0.0560
& 0.0043 & 0.0361 & 0.0020 \\
\midrule
II & DFD-BETEL
& 0.0372 & 0.1762 & 0.0497
& 0.0033 & 0.0218 & 0.0008 \\
   & DFD-Bayes
& 0.0243 & 0.1736 & 0.0483
& 0.0025 & 0.0217 & 0.0007 \\
   & Exchange
& 0.0234 & 0.1690 & 0.0443
& 0.0018 & 0.0211 & 0.0007 \\
   & Standard Bayes
& 0.0234 & 0.1692 & 0.0444
& 0.0018 & 0.0211 & 0.0007 \\
\bottomrule
\end{tabular}
\end{table*}

\subsection{Ising Model}
\label{app:ising_correct}

We next consider the Ising model \citep{ising1925beitrag}, a classical model for interacting binary variables. For an $M\times N$ lattice, let $\mathbf{x}=(x_1,\ldots,x_{MN})$, where $x_i\in\{-1,+1\}$. The probability mass function is
\[
p(\mathbf{x};\alpha,\beta)
=
\frac{1}{Z(\alpha,\beta)}
\exp\left\{
\alpha\sum_{i=1}^{MN}x_i
+
\beta\sum_{(i,j)\in E}x_i x_j
\right\},
\]
where $E$ denotes the set of nearest-neighbor edges, $\alpha$ represents the external field, and $\beta$ controls the interaction between neighboring spins. The normalizing constant involves a sum over all $2^{MN}$ possible configurations and becomes computationally intractable as the lattice size increases.

We use a $10\times10$ lattice with $(\alpha_0,\beta_0)=(0.2,0.3)$. For each replication, we generate $n=50$ independent lattice observations using perfect sampling \citep{propp1996exact} and assign independent $\operatorname{Unif}(0,1)$ priors to $\alpha$ and $\beta$. Table~\ref{tab:ising_correct} summarizes the results. DFD-BETEL has the smallest absolute bias for both parameters, while DFD-Bayes achieves slightly smaller MAD and MSE than DFD-BETEL. DMH attains the smallest MAD and MSE, again illustrating the efficiency advantage of likelihood-based inference under correct specification.

\begin{table}[htbp]
\caption{Correctly specified Ising model: bias, MAD, and MSE of
posterior mean estimates over 500 replications. Bias and MAD are
multiplied by $10^2$, and MSE by $10^3$.}
\label{tab:ising_correct}
\centering
\small
\setlength{\tabcolsep}{4pt}
\begin{tabular}{lrrrrrr}
\toprule
& \multicolumn{2}{c}{Bias}
& \multicolumn{2}{c}{MAD}
& \multicolumn{2}{c}{MSE} \\
\cmidrule(lr){2-3}
\cmidrule(lr){4-5}
\cmidrule(lr){6-7}
Method
& $\alpha$ & $\beta$
& $\alpha$ & $\beta$
& $\alpha$ & $\beta$ \\
\midrule
DFD-BETEL
& -0.03 & 0.11
& 3.90 & 1.76
& 2.34 & 0.48 \\
DFD-Bayes
& 2.65 & -0.90
& 3.43 & 1.56
& 1.78 & 0.38 \\
DMH
& 0.37 & -0.17
& 1.77 & 1.04
& 0.51 & 0.17 \\
\bottomrule
\end{tabular}
\end{table}

\subsection{Additional Details for the Poisson Graphical Model}
\label{app:pg2_correct_details}

For the correctly specified experiment in Section~\ref{sec:pg2}, data are generated from \eqref{eq:pg2_working} with $\boldsymbol{\theta}_0=(\log 8,\log 10,0.05)$. The joint distribution is numerically normalized on $\{0,\ldots,120\}^2$, with the omitted probability verified to be below $10^{-12}$. For the Poisson graphical model, the ratios entering the DFD criterion are
\[
r_{j-}(\mathbf{x},\boldsymbol{\theta})
=
x_j\exp\{-a_j+\gamma_{12}x_{3-j}\},
\qquad
r_{j+}(\mathbf{x},\boldsymbol{\theta})
=
(x_j+1)\exp\{-a_j+\gamma_{12}x_{3-j}\},
\]
for $j=1,2$, where $r_{j-}(\mathbf{x},\boldsymbol{\theta})=0$ when $x_j=0$. The resulting DFD loss is
\[
\varphi(\mathbf{x},\boldsymbol{\theta})
=
\sum_{j=1}^2
\left\{
x_j^2\exp\{-2a_j+2\gamma_{12}x_{3-j}\}
-
2(x_j+1)\exp\{-a_j+\gamma_{12}x_{3-j}\}
\right\}.
\]
DFD-BETEL uses $\nabla_{\boldsymbol{\theta}}
\varphi(\mathbf{x},\boldsymbol{\theta})$ as its estimating function.

\section{Misspecified Poisson Graphical Model}
\label{app:pg2_misspecified}

We further examine uncertainty quantification under misspecification using the two-node Poisson graphical model from Section~\ref{sec:pg2}. Data are generated from
\begin{equation}
\label{eq:pg2_model}
f_\nu(x_1,x_2)
\propto
\exp\{a_{01}x_1+a_{02}x_2-\gamma_0x_1x_2\}
(x_1!x_2!)^{-\nu},
\end{equation}
where $(a_{01},a_{02},\gamma_0)=(\log 8,\log 10,0.05)$. We consider $\nu=0.8$ and $\nu=1.25$ and fit the working model
\eqref{eq:pg2_working}, which fixes $\nu=1$, thereby inducing misspecification through the factorial exponent. For each setting, we consider $n=200,500,$ and $1000$. DFD-BETEL and DFD-Bayes coverage is evaluated with respect to the DFD pseudo-true parameter, while DMH coverage is evaluated with respect to the KL pseudo-true parameter.

\begin{table}[htbp]
\caption{Misspecified Poisson graphical model: empirical coverage
probabilities of nominal 95\% credible intervals over 500 replications.
Each method is evaluated with respect to its corresponding pseudo-true
parameter.}
\label{tab:pg2_misspecified}
\centering
\small
\setlength{\tabcolsep}{4pt}
\begin{tabular}{rrlrrr}
\toprule
$\nu$ & $n$ & Method
& $a_1$ & $a_2$ & $\gamma_{12}$ \\
\midrule
0.8 & 200  & DFD-BETEL & 0.908 & 0.918 & 0.936 \\
    &      & DFD-Bayes & 0.982 & 0.996 & 0.970 \\
    &      & DMH       & 0.926 & 0.922 & 0.930 \\
\addlinespace
    & 500  & DFD-BETEL & 0.928 & 0.940 & 0.928 \\
    &      & DFD-Bayes & 0.990 & 0.998 & 0.970 \\
    &      & DMH       & 0.924 & 0.918 & 0.908 \\
\addlinespace
    & 1000 & DFD-BETEL & 0.940 & 0.944 & 0.934 \\
    &      & DFD-Bayes & 0.990 & 0.996 & 0.956 \\
    &      & DMH       & 0.920 & 0.912 & 0.936 \\
\midrule
1.25 & 200  & DFD-BETEL & 0.904 & 0.944 & 0.918 \\
     &      & DFD-Bayes & 0.932 & 0.952 & 0.910 \\
     &      & DMH       & 0.978 & 0.984 & 0.990 \\
\addlinespace
     & 500  & DFD-BETEL & 0.946 & 0.932 & 0.930 \\
     &      & DFD-Bayes & 0.940 & 0.954 & 0.888 \\
     &      & DMH       & 0.982 & 0.982 & 0.984 \\
\addlinespace
     & 1000 & DFD-BETEL & 0.936 & 0.946 & 0.942 \\
     &      & DFD-Bayes & 0.944 & 0.962 & 0.920 \\
     &      & DMH       & 0.974 & 0.976 & 0.976 \\
\bottomrule
\end{tabular}
\end{table}

Table~\ref{tab:pg2_misspecified} shows that DFD-BETEL coverage
generally approaches the nominal 95\% level as the sample size
increases, reaching $0.934$--$0.946$ across all coordinates at
$n=1000$. Its finite-sample coverage can be lower for smaller
samples. DFD-Bayes exhibits coordinate- and setting-dependent
behavior, including conservative coverage for $\nu=0.8$ and
undercoverage of $\gamma_{12}$ for $\nu=1.25$. DMH also exhibits
setting-dependent miscalibration, tending to undercover for
$\nu=0.8$ and overcover for $\nu=1.25$. These results further
illustrate the calibration of DFD-BETEL under model misspecification while highlighting finite-sample deviations from nominal coverage.

\section{Additional Learning-Rate Sensitivity Results}
\label{app:sensitivity}

We further examine the sensitivity of DFD-Bayes to the learning rate using the misspecified Poisson graphical model in Appendix~\ref{app:pg2_misspecified}. As in Section~\ref{sec:learning_rate}, we consider learning rates $c\widehat{\eta}$ for $c\in\{0.25,0.5,1,2,4\}$, where $c=1$ corresponds to the calibrated baseline. Figure~\ref{fig:pg2_sensitivity} reports coverage at $n=1000$ for both misspecification settings.

\begin{figure}[htbp]
    \centering
    \includegraphics[width=0.9\textwidth]
        {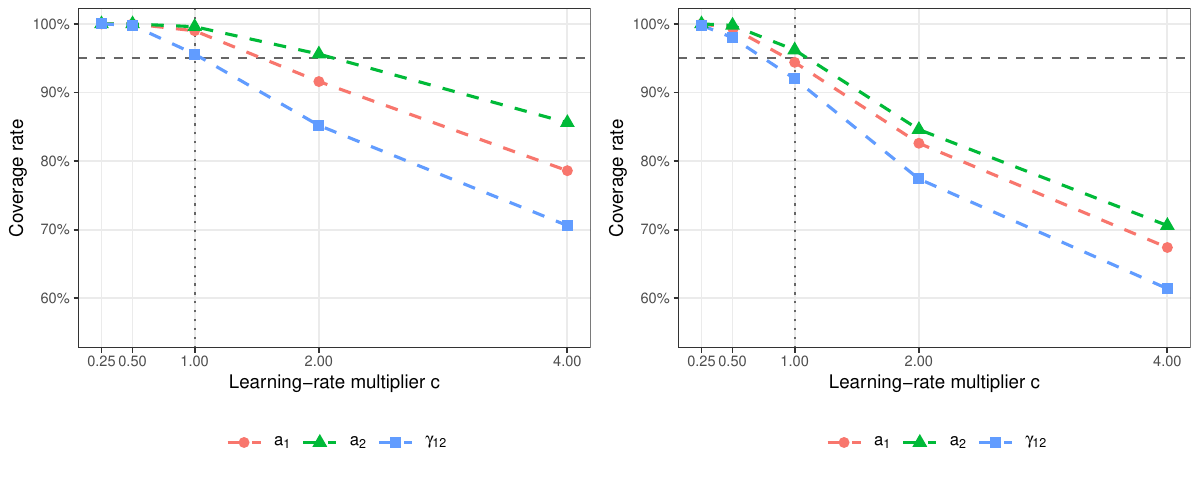}
    \caption{Sensitivity of DFD-Bayes coverage to the learning-rate multiplier $c$ in the misspecified Poisson graphical model at $n=1000$. Left: $\nu=0.8$. Right: $\nu=1.25$. Horizontal dashed lines indicate the nominal 95\% coverage level, and $c=1$ corresponds to the calibrated learning rate.}
    \label{fig:pg2_sensitivity}
\end{figure}

Across both misspecification settings, smaller learning rates produce conservative credible intervals, while larger learning rates lead to substantial undercoverage. The effect also varies across parameters; in particular, the interaction parameter $\gamma_{12}$ is generally more sensitive to increasing the learning rate. These results reinforce the sensitivity observed in the Potts experiment and illustrate that a single calibrated learning rate need not provide comparable uncertainty calibration across parameters.

\section{Additional Details for the Automobile Claim Data}
\label{app:cmp_insurance}

We provide additional computational details and diagnostics for the automobile claim analysis in Section~\ref{sec:real_data}. Because the data are available as grouped frequencies, the DFD moment calculations and Bayesian ETEL weights are evaluated using the distinct claim-count levels and their corresponding frequencies. This frequency-compressed implementation is algebraically equivalent to expanding the data into all $n=106{,}974$ individual observations.

For numerical stability, posterior sampling is performed in $(\log\lambda,\log\nu)$ coordinates, with the corresponding Jacobian included in the posterior density. Four independent Markov chains are run for 30,000 iterations, with the first 10,000 discarded as warm-up and every second post-warm-up draw retained, yielding 40,000 posterior draws across the four chains. The proposal scale is adapted during warm-up and held fixed thereafter. We assess convergence and sampling efficiency using rank-normalized $\widehat R$, bulk effective sample size (Bulk ESS), and tail effective sample size (Tail ESS) following
\citet{vehtari2021rank}.

\begin{table}[htbp]
\caption{MCMC diagnostics for the automobile claim analysis.}
\label{tab:cmp_insurance_diagnostics}
\centering
\small
\begin{tabular}{lrrr}
\toprule
Parameter & $\widehat R$ & Bulk ESS & Tail ESS \\
\midrule
$\lambda$ & 1.0002 & 9,620  & 12,221 \\
$\nu$     & 1.0005 & 10,139 & 11,651 \\
\bottomrule
\end{tabular}
\end{table}

The post-warm-up acceptance rates range from $0.283$ to $0.309$. Both parameters have $\widehat R<1.001$, bulk ESS above 9,600, and tail ESS above 11,600, indicating satisfactory convergence and sampling efficiency. Figure~\ref{fig:cmp_insurance_trace} shows the corresponding trace plots.

\begin{figure}[htbp]
\centering
\includegraphics[width=0.85\textwidth]
{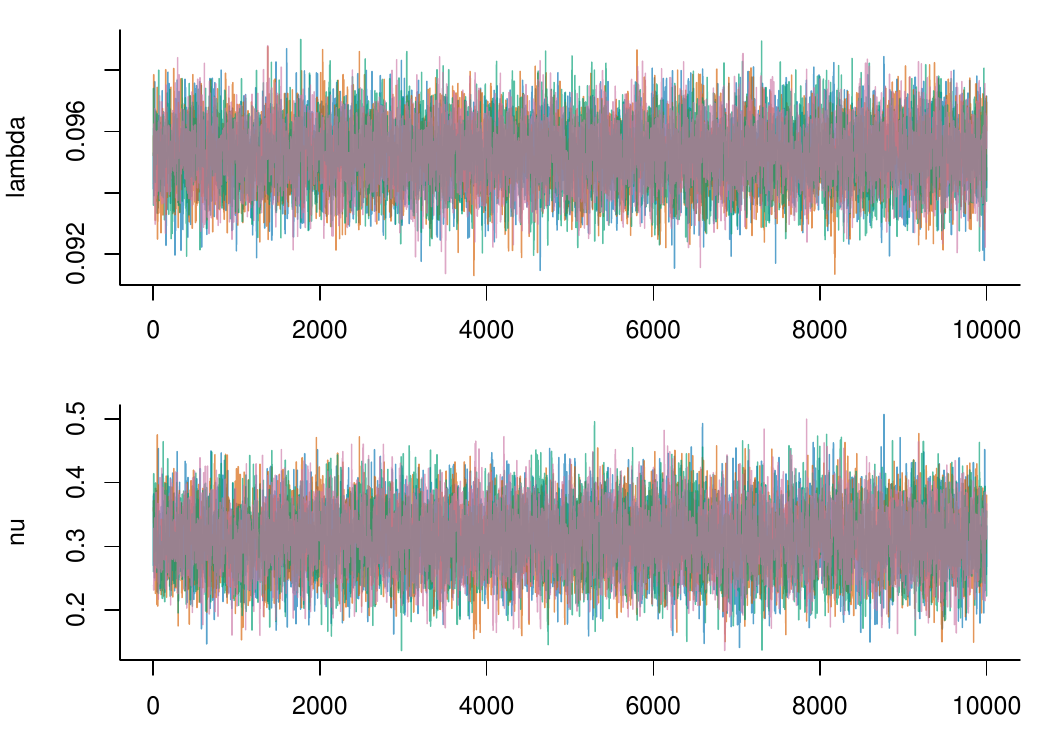}
\caption{Trace plots for $\lambda$ and $\nu$ from the four DFD-BETEL
chains in the automobile claim analysis.}
\label{fig:cmp_insurance_trace}
\end{figure}

\end{document}